\def\dref#1{(\ref{#1})}
\newtheorem{lemma}{Lemma}
\newtheorem{assumption}{Assumption}
\newtheorem{theorem}{Theorem}
\newtheorem{corollary}{Corollary}
\newtheorem{remark}{Remark}
\newtheorem{problem}{Problem}
\newtheorem{example}{Example}
\def\prooft{\noindent{\bf Proof of Theorem 3} }
\def\proofthm{\noindent{\bf Proof of Theorem 4} }
\begin{document}

\begin{frontmatter}

\title{Minimal-order Appointed-time Unknown Input Observers: Design and Applications} 

\author[lv]{Yuezu Lv}\ead{yzlv@seu.edu.cn},  
\author[duan]{Zhongkui Li}\ead{zhongkli@pku.edu.cn},    
\author[duan]{Zhisheng Duan}\ead{duanzs@pku.edu.cn}  


\address[lv]{Department of Systems Science, School of Mathematics, Southeast University, Nanjing 211189, China} %
\address[duan]{State Key Laboratory for Turbulence and
Complex Systems, Department of Mechanics and Engineering Science,
College of Engineering, Peking University, Beijing 100871, China}

\begin{keyword}                           
Appointed-time unknown input observer, minimal-order observer, attack-free protocol, consensus                      
\end{keyword}                             

\begin{abstract}                          

This paper presents a framework on minimal-order appointed-time unknown input observers for linear systems based on the pairwise observer structure. A minimal-order appointed-time observer is first proposed for the linear
system without the unknown input, which can estimate the state exactly at the preset time by seeking for the unique solution of a system of linear equations. To further release the computational burden, another form of the appointed-time observer is designed. For the general linear system with the unknown input acting on both the system dynamics and the measured output, the model reconfiguration is made to decouple the effect of the unknown input, and the gap between the existing reduced-order appointed-time unknown input observer and the possible minimal-order appointed-time observer is revealed. Based on the reconstructed model, the minimal-order appointed-time unknown input observer is presented to realize state estimation of linear system with the unknown input at the arbitrarily small preset time. The minimal-order appointed-time unknown input observer is then applied to the design of fully distributed adaptive output-feedback attack-free consensus protocols for linear multi-agent systems.
\end{abstract}

\end{frontmatter}

\section{Introduction}\label{s1}
The state estimation has been well investigated since the invention of the well-known Kalman filter [\cite{kalman1960jbe}] and Luenberger observer [\cite{luenberger1964tme}] in 1960s,
and various observers have been developed for linear or nonlinear systems [\cite{houm2002tac,dezaf1993tac,dingz2012tac}].
The unknown input observer is a typical state estimation for the systems with external unknown inputs, which has attracted widespread attentions due to its resultful applications in the fields of fault diagnosis [\cite{gaoz2016tie,cristofaro2014auto}] and attack detection [\cite{amins2013tcst,amelia2018tps}].

The generalized dynamic model of linear time-invariant system with unknown input can be formulated by
\begin{equation}\label{model}
\begin{aligned}
\dot{x}&=Ax+Bu+Ew,\\
y&=Cx+Du+Fw,
\end{aligned}
\end{equation}
where $x\in\mathbf{R}^n$, $u\in\mathbf{R}^p$, $w\in\mathbf{R}^q$, $y\in\mathbf{R}^m$ are the state, the control input, the unknown input and the measurement output, respectively. In practice, the unknown input $w$ can represent the external disturbances, unmodeled dynamics or actuator failures.

The unknown input observer has been investigated from various perspectives. A procedure of the minimal-order unknown input observer was proposed in \cite{wang1975tac} for the linear system \dref{model} with $F=0$, and the existence conditions were revealed in \cite{kudva1980tac} that the rank of $CE$ equals to that of $E$ and the triple $(A,E,C)$ has stable or even no invariant zeros. Following the conditions proposed in \cite{kudva1980tac}, full-order unknown input observers were designed in \cite{yang1988tac,darouach1994tac}, and the reduced-order observers were presented in \cite{houm1992tac,syrmos1993tac}. \cite{kurek1983tac,houm1994tac} further studied the general model \dref{model}, and presented minimal-order observer design procedure as well as the existence conditions. Unknown input observers for discrete-time systems were illustrated in \cite{valcher1999tac,sundaram2007tac,sundaram2008auto}, and \cite{darouach1996tac,koenig2002tac,zhang2020tac} investigated the unknown input observer design for descriptor systems. The unknown input functional observers were designed in \cite{sundaram2008auto,trinh2008tac,sakhraoui2020tac}. Unknown input observers for the switched systems [\cite{bejarano2011tac,zhang2020tac}] and $h_\infty$ unknown input observers [\cite{gao2016auto}] have also been well studied.

One common feature of the aforementioned unknown input observers is that the system state is estimated asymptotically. In practical applications such as the fault detection, it is desired to realize finite-time estimation of the state. Among all the categories of finite-time convergence, the strictest one is to reach convergence exactly at the preset time instant, which is named as appointed-time or specified-time convergence [\cite{zhaoy2019tac}]. The appointed-time observer for linear systems without the unknown input was proposed in \cite{engel2002tac}, where a pairwise observer structure was designed, consisting of two Luenberger observers and achieving the appointed-time state estimation based on time-delayed observer information. By introducing a time-varying coordinate transformation matrix, a novel observer for linear systems was designed in \cite{pingilberto2020cdc}, which successfully realized the appointed-time state estimation with an arbitrarily small predetermined time. Based on the pairwise observer structure,
the appointed-time observers for nonlinear systems were presented in \cite{kreisselmeier2003tac,menold2003cdc}, and the appointed-time functional observers for linear systems were studied in \cite{raff2005cdc}. \cite{liy2015auto} further considered the appointed-time state estimation of nonlinear systems with measurement noise. The appointed-time observer for discrete-time systems was presented in \cite{ao2018is}, where the applications on the attack detection were also investigated. Following the observer design structure of \cite{engel2002tac}, the appointed-time unknown input observer for linear system \dref{model} with $F=0$ was proposed in \cite{rafft2006mcca}. Distributed appointed-time unknown input observers were further investigated in \cite{lvyz2020tcns}, based on which fully distributed attack-free consensus protocols were proposed for multi-agent systems.

Notice that the above-mentioned appointed-time observers based on the pairwise observer structure are either of full order $2n$, or of reduced order $2(n-\text{rank}(E))$. From the point of view of realization, it is favourable to design minimal-order appointed-time observers, which is expected to be of order $2(n-\text{rank}(C))$ when $F=0$. In this paper, we intend to answer whether such minimal-order appointed-time observer exists and how to design the observers.

For the linear system without the unknown input, we first give a thorough analysis of the pairwise observer design structure presented in \cite{engel2002tac} to reveal how it works on realizing state estimation at the appointed time. That is, to build a system of $6n$ linear equations in $6n$ unknowns, and construct the observer expression based on the unique solution of the system of linear equations. Following such design methodology, the pairwise minimal-order observers with different poles are proposed, and a system of $(6n-4m)$ equations in $(6n-4m)$ unknowns is constructed by adding the $2m$ equations of measured output at time instant $t$ as well as the delayed time instant $t-\tau$. It is demonstrated that the coefficient matrix is invertible, which gives a unique solution to the system of linear equations. The appointed-time observer is then designed by taking the portion of the unique solution. To release the computation burden caused by calculating the inverse of the high-dimensional coefficient matrix, another form of the minimal-order appointed-time observer is formulated, whose structure is coincident with that of the full-order appointed-time observer in \cite{engel2002tac}.

For the linear system with the unknown input, we first reconstruct the model to decouple the effect of the unknown input, and exhibit both full-order and reduced-order appointed-time unknown input observers based on different reconstructed models. The gap between the reduced-order and expected minimal-order appointed-time unknown input observers is revealed, which motivates us to further decrease the observer order. Following the observer design structure of the minimal-order appointed-time observer for linear systems without the unknown input, the minimal-order appointed-time unknown input observer is obtained by designing the observer to estimate the state of the reconstructed model at the appointed time. The special case that the unknown input does not act on the measured output, i.e., $F=0$, is also discussed. The proposed minimal-order appointed-time unknown input observer is then applied into the consensus problem of linear multi-agent systems, where distributed minimal-order appointed-time unknown input observer is put forward to estimate the consensus error by viewing the relative input among neighboring agents as the unknown input, and the distributed adaptive attack-free consensus protocol is presented based on the consensus error estimation. The proposed protocol possesses the feature of avoiding information transmission via communication channel, which takes the advantages of reducing the communication cost and being free from network attacks.

The rest of this paper is organized as follows. Section \ref{s2} presents the design structures of the minimal-order appointed-time observer for linear system \dref{model} without unknown input $w$.
Section \ref{s3} further studies the minimal-order appointed-time unknown input observers. Section \ref{s4} applies the appointed-time unknown input observer into the design of fully distributed adaptive attack-free consensus protocols for linear multi-agent systems, and gives a simulation example to illustrate the effectiveness of the proposed methods. Section \ref{s6} concludes this paper.

Notations: Let $\mathbb{R}^{n\times m}$ be the set of $n\times m$ matrices. $I_p$ represents the $p$-dimensional identity matrix.
Symbol $\text{diag}(x_1,\cdots,x_n)$ represents a diagonal matrix with diagonal elements being $x_i$. For a matrix $A$, $A^+$ denotes its generalized inverse with $AA^+A=A$ and $A^+AA^+=A^+$.
For a square matrix $Z$,
$\Re_i\{\lambda(Z)\}$ represents the real part of the $i$-th eigenvalue of $Z$.

\section{Minimal-order Appointed-time Observers}\label{s2}
We first study the minimal-order appointed-time observer for linear systems without the unknown input, i.e., $w=0$, or $E=0,F=0$ in model \dref{model}. Without loss of generality, we assume that $C$ is of full row rank.

\subsection{Problem Analysis}
Under the assumption that $(A,C)$ is observable, the full-order appointed-time observer was given in \cite{engel2002tac} as
\begin{equation}\label{obs10}
\begin{aligned}
\dot{\bar {v}}_1&=(A+L_1C)v_1-L_1y+(B+L_1D)u,\\
\dot{\bar {v}}_2&=(A+L_2C)v_2-L_2y+(B+L_2D)u,\\
\bar x(t)&=\bar D_c[\bar v(t)-e^{\bar A_c\tau}\bar v(t-\tau)],
\end{aligned}
\end{equation}
where $\bar v=[\bar v_1^T,\bar v_2^T]^T$, $\bar A_c=\text{diag}(A+L_1C,A+L_2C)$ with $L_1$ and $L_2$ as the gain matrices satisfying $\Re_i\{\lambda(A+L_2C)\}<\sigma<\Re_j\{\lambda(A+L_1C)\}<0,\forall i,j=1,\cdots,n$, $\sigma$ is a negative constant, $\bar D_c=\begin{bmatrix}I_n&0\end{bmatrix}\begin{bmatrix}\bar C_c&e^{\bar A_c\tau}\bar C_c\end{bmatrix}^{-1}$ with $\bar C_c=\begin{bmatrix}I_n\\I_n\end{bmatrix}$, and $\tau$ is a positive constant.

The methodology of observer \dref{obs10} is to construct a system of linear equations, which has a unique solution [\cite{engel2002tac}]. Specifically, define $\tilde{\bar{v}}_i=\bar v_i-x,i=1,2,$ and $\tilde{\bar{v}}=[\tilde{\bar{v}}_1^T,\tilde{\bar{v}}_2^T]^T$. Then, we have
\begin{equation}\label{lineareq0}
\begin{aligned}
&\bar C_cx(t)+\tilde{\bar{v}}(t)=\bar v(t),\\
&\bar C_cx(t-\tau)+\tilde{\bar{v}}(t-\tau)=\bar v(t-\tau),\\
&\tilde{\bar{v}}(t)=e^{\bar A_c\tau}\tilde{\bar{v}}(t-\tau),
\end{aligned}
\end{equation}
which contains $6n$ linear equations in $6n$ unknowns $x(t),x(t-\tau),\tilde{\bar{v}}(t),\tilde{\bar{v}}(t-\tau)$. The unique solution of $x(t)$ can be calculated as $\bar x(t)$ in \dref{obs10}. Thus, it is concluded in \cite{engel2002tac} that the observer \dref{obs10} can estimate the exact value of $x(t)$ at appointed time $\tau$, i.e., $\bar x(t)\equiv x(t),\forall t\geq\tau$.

To design minimal-order appointed-time observer, we can only construct two $(n-m)$-order observers:
\begin{equation}\label{obs11}
\begin{aligned}
\dot{{v}}_1&=M_1v_1+H_1y+(T_1B-H_1D)u,\\
\dot{{v}}_2&=M_2v_2+H_2y+(T_2B-H_2D)u,
\end{aligned}
\end{equation}
where $M_1,M_2\in\mathbb{R}^{(n-m)\times(n-m)}$ are gain matrices satisfying $\Re_i\{\lambda(M_2)\}<\sigma<\Re_j\{\lambda(M_1)\}<0,\forall i,j=1,\cdots,n-m$, $H_1$ and $H_2$ are matrices such that both $(M_1,H_1)$ and $(M_2,H_2)$ are controllable, $T_1$ and $T_2$ are respectively the unique solutions of the Sylvester equations
\begin{equation}\label{syl}
T_iA-M_iT_i=H_iC,~i=1,2,
\end{equation}
such that $\begin{bmatrix}T_i\\C\end{bmatrix}$ are invertible. Let $U_i{=}[S_i~\bar S_i]{=}\begin{bmatrix}T_i\\C\end{bmatrix}^{-1}$.

It is well-known that each $v_i$ can exponentially converge to $T_ix$. Define $\tilde{v}_i=v_i-T_ix$, and we have
\begin{equation}\label{lineareq}
\begin{aligned}
&T_ix(t)+\tilde{{v}}_i(t)=v_i(t),\\
&T_ix(t-\tau)+\tilde{{v}}_i(t-\tau)=v_i(t-\tau),\\
&\tilde{{v}}_i(t)=e^{M_i\tau}\tilde{{v}}_i(t-\tau),~~i=1,2.
\end{aligned}
\end{equation}
It is clear that \dref{lineareq} contains $6(n-m)$ equations in $2n+4(n-m)$ unknowns $x(t),x(t-\tau),\tilde{v}_i(t),\tilde{v}_i(t-\tau)$. As a result, $x(t)$ cannot be uniquely determined by the above system of linear equations. To design minimal-order appointed-time observer, the main difficulty lies in constructing an appropriate system of linear equations to calculate a unique solution of $x(t)$.

\subsection{Observer Design}
To make the system of linear equations \dref{lineareq} have a unique solution, $2m$ extra independent equations are required. It is natural to add the following $2m$ equations:
\begin{equation}\label{leq0}
\begin{aligned}
&Cx(t)=y(t)-Du(t),\\
&Cx(t-\tau)=y(t-\tau)-Du(t-\tau),
\end{aligned}
\end{equation}
and the system of linear equations \dref{lineareq} and \dref{leq0} can be written as
\begin{equation}\label{leq1}A_0x_0=b_0\end{equation}
with $x_0=[x^T(t),\tilde{v}_1^T(t),\tilde{v}_2^T(t), x^T(t-\tau),\tilde{v}_1^T(t-\tau),\tilde{v}_2^T(t-\tau)]^T$,
$b_0=[v_1^T(t),(y(t)-Du(t))^T,0_{n-m}^T,v_2^T(t), v_1^T(t-\tau),(y(t-\tau)-Du(t-\tau))^T,0_{n-m}^T,v_2^T(t-\tau)]^T$ and
$$
A_0=\begin{bmatrix}T_1&I_{n-m}&0&0&0&0\\C&0&0&0&0&0\\0&I_{n-m}&0&0&-e^{M_1\tau}&0\\
T_2&0&I_{n-m}&0&0&0\\
0&0&0&T_1&I_{n-m}&0\\0&0&0&C&0&0\\0&0&I_{n-m}&0&0&-e^{M_2\tau}\\0&0&0&T_2&0&I_{n-m}
\end{bmatrix}.
$$

We now have $2n+4(n-m)$ equations in $2n+4(n-m)$ unknowns. The following result shows that the system of linear equations \dref{leq1} has a unique solution.
\begin{theorem}\label{theorem1}
Suppose that $(A,C)$ is observable. Then $A_0$ is invertible for almost all $\tau>0$. Moreover, the observer
\begin{equation}\label{obs13}
\hat x(t)=\begin{bmatrix}I_n&0\end{bmatrix}A_0^{-1}b_0
\end{equation}
can estimate the state $x(t)$ of the linear system \dref{model} without the unknown input $w$ at appointed time $\tau$.
\end{theorem}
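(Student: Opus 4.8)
The plan is to establish invertibility of $A_0$ first, and then argue that the top block of $A_0^{-1}b_0$ must equal $x(t)$ whenever the error variables have genuinely decayed in the sense captured by \dref{lineareq}. The observer property then follows from the fact that $\tilde v_i(t) = e^{M_i\tau}\tilde v_i(t-\tau)$ holds exactly for all $t\ge\tau$ because each $v_i$-subsystem is linear with system matrix $M_i$ (the inhomogeneous terms $H_iy+(T_iB-H_iD)u$ cancel against $T_i\dot x$ using the Sylvester equation \dref{syl}), so $\tilde v_i$ satisfies $\dot{\tilde v}_i = M_i\tilde v_i$ and hence $\tilde v_i(t)=e^{M_i\tau}\tilde v_i(t-\tau)$ identically. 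Consequently, for $t\ge\tau$ the true quantities $x(t),x(t-\tau),\tilde v_1(t),\tilde v_2(t),\tilde v_1(t-\tau),\tilde v_2(t-\tau)$ form a solution of \dref{leq1}; once $A_0$ is shown invertible this solution is unique, so $[\,I_n\ 0\,]A_0^{-1}b_0 = x(t)$, giving $\hat x(t)\equiv x(t)$ for all $t\ge\tau$.

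For the invertibility of $A_0$, I would argue that $\det A_0$ is a real-analytic (indeed entire) function of $\tau$ through the entries of $e^{M_1\tau}$ and $e^{M_2\tau}$, so it suffices to exhibit a single $\tau$ (or the limit $\tau\to 0^+$, or $\tau\to\infty$) at which $\det A_0\neq 0$; an analytic function that is not identically zero vanishes only on a set of measure zero, which yields the ``almost all $\tau>0$'' conclusion. To see that $\det A_0\not\equiv 0$, reduce the homogeneous system $A_0x_0=0$: the second and sixth block rows force $Cx(t)=Cx(t-\tau)=0$; the first and fourth give $\tilde v_i(t)=-T_ix(t)$ and $\tilde v_i(t-\tau)=-T_ix(t-\tau)$; substituting into the third and seventh rows yields $T_1x(t)=e^{M_1\tau}T_1x(t-\tau)$ and $T_2x(t)=e^{M_2\tau}T_2x(t-\tau)$ (and the fifth, eighth rows become redundant once one uses $U_i=[S_i\ \bar S_i]=\begin{bmatrix}T_i\\ C\end{bmatrix}^{-1}$, which gives $x = S_iT_ix + \bar S_iCx$, i.e. $x(t)=S_1T_1x(t)$ and $x(t-\tau)=S_1T_1x(t-\tau)$ since $Cx=0$). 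So the homogeneous system collapses to: find $x(t),x(t-\tau)$ with $Cx(t)=Cx(t-\tau)=0$, $T_1x(t)=e^{M_1\tau}T_1x(t-\tau)$, $T_2x(t)=e^{M_2\tau}T_2x(t-\tau)$. Eliminating, one needs $T_2x(t)=e^{M_2\tau}T_2 x(t-\tau)$ while $x(t)=S_1e^{M_1\tau}T_1x(t-\tau)$ with $CS_1e^{M_1\tau}T_1 x(t-\tau)=0$; this is a homogeneous linear condition on $x(t-\tau)$ whose coefficient matrix depends analytically on $\tau$, and one checks it has only the trivial solution for generic $\tau$.

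The main obstacle I anticipate is precisely this last genericity check: showing that the reduced coefficient matrix — built from $C$, $T_1$, $T_2$, $S_1$, and the two matrix exponentials — is nonsingular for some $\tau$, equivalently that the two ``delayed'' constraints coming from the fast pair $M_2$ and the slow pair $M_1$ are not compatible except at $x=0$. Here the spectral separation $\Re_i\{\lambda(M_2)\}<\sigma<\Re_j\{\lambda(M_1)\}<0$ is what should be exploited: as $\tau\to\infty$, $e^{M_2\tau}\to 0$ much faster than $e^{M_1\tau}$, so the $M_2$-constraint forces $T_2x(t-\tau)\to 0$ while the $M_1$-constraint cannot simultaneously annihilate $T_1x(t-\tau)$ unless $x(t-\tau)=0$ (using invertibility of $\begin{bmatrix}T_i\\ C\end{bmatrix}$ together with $Cx(t-\tau)=0$). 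A cleaner route, mirroring the $6n$-equation argument of \cite{engel2002tac}, is to show that any nonzero $x_0$ in the kernel would correspond to a nonzero solution of the homogeneous observer dynamics that is simultaneously $e^{M_1\tau}$- and $e^{M_2\tau}$-periodic over the window $[t-\tau,t]$, which the eigenvalue interlacing precludes; I would use whichever of these makes the determinant computation least painful, and then invoke analyticity in $\tau$ to pass from ``some $\tau$'' to ``almost all $\tau$''.
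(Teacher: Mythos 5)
Your proposal is correct and follows essentially the same route as the paper: both reduce the invertibility of $A_0$ to the nonsingularity of the $(n-m)\times(n-m)$ matrix $T_2S_1e^{M_1\tau}-e^{M_2\tau}T_2S_1$ (you via the kernel of $A_0$, the paper via an explicit block elimination $A_0P_0$), then exploit the spectral separation to get nonsingularity for large $\tau$ and analyticity in $\tau$ for the ``almost all'' conclusion. The one step you flag as the main obstacle --- that the limit matrix $T_2S_1$ is invertible --- closes exactly as you guess: $S_1$ maps $\mathbb{R}^{n-m}$ bijectively onto $\ker C$ (since $CS_1=0$ and $T_1S_1=I_{n-m}$) and $T_2$ is injective on $\ker C$ (since $\begin{bmatrix}T_2\\C\end{bmatrix}$ is invertible), which the paper phrases as $|T_2S_1|=|P_1|\neq0$ via $\begin{bmatrix}T_2\\C\end{bmatrix}=\begin{bmatrix}P_1&P_{12}\\0&I_m\end{bmatrix}\begin{bmatrix}T_1\\C\end{bmatrix}$.
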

\begin{proof}
We only have to show the invertibility of $A_0$, since the system of linear equations \dref{leq1} has a unique solution if and only if $A_0$ is invertible, which gives $x(t)=\begin{bmatrix}I_n&0\end{bmatrix}A_0^{-1}b_0$ for $t\geq\tau$.

The invertibility of $A_0$ is equivalent to that the determinant of $A_0$ is nonzero, i.e., $|A_0|\neq0$. Let
$$P_0=\begin{bmatrix}I_n&-S_1&0&0&-S_1e^{M_1\tau}&0\\0&I_{n-m}&0&0&e^{M_1\tau}&0\\
0&0&I_{n-m}&0&T_2S_1e^{M_1\tau}&0\\
0&0&0&I_n&-S_1&0\\0&0&0&0&I_{n-m}&0\\0&0&0&0&0&I_{n-m}
\end{bmatrix},$$
and we have
\begin{equation*}
\begin{aligned}
&A_0P_0=
\begin{bmatrix}U_1^{-1}&0&0&0&0&0\\0&I_{n-m}&0&0&0&0\\
T_2&-T_2S_1&I_{n-m}&0&0&0\\
0&0&0&U_1^{-1}&0&0\\0&0&I_{n-m}&0&T_2S_1e^{M_1\tau}&-e^{M_2\tau}\\0&0&0&T_2&-T_2S_1&I_{n-m}
\end{bmatrix}.
\end{aligned}
\end{equation*}
We can obtain that
\begin{equation*}
\begin{aligned}
|A_0|&=|A_0P_0|=|U_1^{-1}|^2\begin{vmatrix}T_2S_1e^{M_1\tau}&-e^{M_2\tau}\\-T_2S_1&I_{n-m}\end{vmatrix}\\
&=|U_1^{-1}|^2\begin{vmatrix}T_2S_1e^{M_1\tau}-e^{M_2\tau}T_2S_1&0\\-T_2S_1&I_{n-m}\end{vmatrix}\\
&=|U_1^{-1}|^2|e^{M_1\tau}|\begin{vmatrix}T_2S_1-e^{M_2\tau}T_2S_1e^{-M_1\tau}\end{vmatrix}.
\end{aligned}
\end{equation*}

Note that both $\begin{bmatrix}T_i\\C\end{bmatrix},i=1,2$ are invertible. Let $\begin{bmatrix}T_2\\C\end{bmatrix}=\begin{bmatrix}P_1&P_{12}\\0&I_m\end{bmatrix}\begin{bmatrix}T_1\\C\end{bmatrix}$, and we have that $\text{rank}(P_1)=n-m$. Thus, $$|T_2S_1|=\left|\begin{bmatrix}P_1&P_{12}\end{bmatrix}\begin{bmatrix}T_1\\C\end{bmatrix}S_1\right|=|P_1|\neq0.$$

On the other hand, $$e^{M_2\tau}T_2S_1e^{-M_1\tau}=e^{(M_2-\sigma I_{n-m})\tau}T_2S_1e^{(\sigma I_{n-m}-M_1)\tau}.$$
Since $\Re_i\{\lambda(M_2)\}<\sigma<\Re_j\{\lambda(M_1)\}$, we can obtain that $e^{M_2\tau}T_2S_1e^{-M_1\tau}\rightarrow0$ as $\tau\rightarrow\infty$, which implies that $|T_2S_1-e^{M_2\tau}T_2S_1e^{-M_1\tau}|\rightarrow|T_2S_1|$ as $\tau\rightarrow\infty$. Thus, the overall determinant $|A_0|\neq0$ with sufficiently large $\tau$. Since the determinant is an analytic function of $\tau$, it only has isolated zeros. Notice that $|T_2S_1-e^{M_2\tau}T_2S_1e^{-M_1\tau}|=0$ when $\tau=0$. Therefore, for almost all $\tau>0$, $|A_0|\neq 0$. This completes the proof. $\hfill\blacksquare$
\end{proof}
\begin{remark}
Theorem \ref{theorem1} reveals that the minimal-order appointed-time observers do exist if $(A,C)$ is observable. Since $|A_0|$ has only isolated zeros and $|A_0|=0$ when $\tau=0$, there exists $\tau^*>0$ such that $|A_0|\neq0,\forall \tau\in(0,\tau^*)$, meaning that the predetermined time $\tau$ can be arbitrarily small. The methodology of the observer is to design two $(n-m)$-order observers, and introduce delayed information of the observers as well as the measurement output to generate a system of $2n+4(n-m)$ linear equations in $2n+4(n-m)$ variables, where the coefficient matrix $A_0$ is invertible so that there is a unique solution for the system of linear equations. Compared to the existing works on designing appointed-time observer with the pairwise observer structure [\cite{engel2002tac,kreisselmeier2003tac,menold2003cdc,liy2015auto}], the observer \dref{obs13} is in a minimal-order form with the design of $2(n-m)$-order dynamical variables $v_1$ and $v_2$, which has the advantage of consuming lower computation cost.
\end{remark}

\subsection{Observer Reconstruction}
It should be noticed that the coefficient matrix $A_0$ is of $(6n-4m)\times(6n-4m)$ dimension, and the inverse of $A_0$ may not be easy to calculate when $n$ increases. Besides, the design of the observer \dref{obs13} is quite different from that of the full-order observer $\bar x(t)$ in \dref{obs10}. In this subsection, we intend to reconstruct the appointed-time observer with the information of $b_0$.

Define $\phi=[v_1^T,(y-Du)^T,v_2^T,(y-Du)^T]^T$. Choose $\bar M_1$ and $\bar M_2$ such that $\Re_i\{\lambda(\bar M_2)\}<\sigma<\Re_j\{\lambda(\bar M_1)\}<0,\forall i,j=1,\cdots,m$.
Let $D_c=[I_n~0]\begin{bmatrix}I_n&U_1e^{\hat M_1\tau}U_1^{-1}\\I_n&U_2e^{\hat M_2\tau}U_2^{-1}\end{bmatrix}^{-1}$ with $\hat M_i=\text{diag}(M_i,\bar M_i),i=1,2$. We can redesign the minimal-order appointed-time observer by
\begin{equation}\label{obs12}
\hat x(t)=D_c\begin{bmatrix}U_1&~\\~&U_2\end{bmatrix}[\phi(t)-e^{\hat M\tau}\phi(t-\tau)],
\end{equation}
where $\hat M=\text{diag}(\hat M_1,\hat M_2)$.
\begin{theorem}\label{theorem2}
Suppose that $(A,C)$ is observable. For almost all $\tau>0$, the observer \dref{obs12} can estimate the state $x(t)$ of the linear system \dref{model} without the unknown input $w$ at appointed time $\tau$.
\end{theorem}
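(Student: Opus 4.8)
The plan is to show that the reconstructed observer \dref{obs12} is simply a repackaging of the observer \dref{obs13} from Theorem \ref{theorem1}, so that the already-established appointed-time convergence carries over. The key observation is that the two $m$-dimensional extra blocks built from $\bar M_1,\bar M_2$ are ``dummy'' dynamics: the vector $\phi$ stacks $v_i$ together with $y-Du$, and the latter is exactly the quantity that, when multiplied by $U_i=\begin{bmatrix}S_i&\bar S_i\end{bmatrix}$, reconstructs $x$ via $U_i\begin{bmatrix}T_i\\C\end{bmatrix}x = x$. So the first step is to set up the augmented error variables: for each $i$, let $z_i=\begin{bmatrix}v_i\\ y-Du\end{bmatrix}$ and note $z_i = \begin{bmatrix}T_i\\ C\end{bmatrix}x + \begin{bmatrix}\tilde v_i\\ 0\end{bmatrix}$, i.e. $z_i = U_i^{-1}x + \tilde z_i$ with $\tilde z_i=\begin{bmatrix}\tilde v_i\\0\end{bmatrix}$.

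Second, I would verify the error propagation $\tilde z_i(t)=e^{\hat M_i\tau}\tilde z_i(t-\tau)$. This is where the block-diagonal structure $\hat M_i=\text{diag}(M_i,\bar M_i)$ matters: the top block gives $\tilde v_i(t)=e^{M_i\tau}\tilde v_i(t-\tau)$, which is \dref{lineareq}, and the bottom block acts on the zero vector, so the identity $0 = e^{\bar M_i\tau}\cdot 0$ holds trivially regardless of $\bar M_i$. Hence stacking over $i=1,2$, with $\hat M=\text{diag}(\hat M_1,\hat M_2)$ and $\phi=[z_1^T,z_2^T]^T$, we get $\tilde\phi(t)=e^{\hat M\tau}\tilde\phi(t-\tau)$ where $\tilde\phi=[\tilde z_1^T,\tilde z_2^T]^T$, and also $\phi = \begin{bmatrix}U_1^{-1}&0\\0&U_2^{-1}\end{bmatrix}\begin{bmatrix}I_n\\I_n\end{bmatrix}x + \tilde\phi$. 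Therefore
\begin{equation*}
\begin{bmatrix}U_1&0\\0&U_2\end{bmatrix}\!\big[\phi(t)-e^{\hat M\tau}\phi(t-\tau)\big]
=\begin{bmatrix}I_n\\ I_n\end{bmatrix}x(t)-\begin{bmatrix}U_1 e^{\hat M_1\tau}U_1^{-1}\\ U_2 e^{\hat M_2\tau}U_2^{-1}\end{bmatrix}x(t-\tau)+\begin{bmatrix}U_1&0\\0&U_2\end{bmatrix}\!\big[\tilde\phi(t)-e^{\hat M\tau}\tilde\phi(t-\tau)\big],
\end{equation*}
and the last term vanishes for $t\ge\tau$ by the propagation identity. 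So for $t\ge\tau$ the bracketed quantity equals $\begin{bmatrix}I_n\\I_n\end{bmatrix}x(t)-\begin{bmatrix}U_1 e^{\hat M_1\tau}U_1^{-1}\\U_2 e^{\hat M_2\tau}U_2^{-1}\end{bmatrix}x(t)$ once we also use that $x(t)$ and $x(t-\tau)$ coincide after multiplication — more precisely, I would write it as a linear system $\begin{bmatrix}I_n&\Xi_1\\ I_n&\Xi_2\end{bmatrix}\begin{bmatrix}x(t)\\ \ast\end{bmatrix}=(\ )$ analogous to the $6n$-equation setup, with $\Xi_i=U_i e^{\hat M_i\tau}U_i^{-1}$, and read off $x(t)=D_c\begin{bmatrix}U_1&0\\0&U_2\end{bmatrix}[\phi(t)-e^{\hat M\tau}\phi(t-\tau)]$ exactly when $D_c=[I_n~0]\begin{bmatrix}I_n&\Xi_1\\I_n&\Xi_2\end{bmatrix}^{-1}$, which is the stated $D_c$.

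Third and last, I must confirm that $\begin{bmatrix}I_n&\Xi_1\\ I_n&\Xi_2\end{bmatrix}$ is invertible for almost all $\tau>0$, so that $D_c$ is well-defined; equivalently $|\Xi_1-\Xi_2|\neq0$. Since $\Xi_i=U_i e^{\hat M_i\tau}U_i^{-1}$ and $\hat M_i$ has all eigenvalues with negative real part, both $\Xi_i\to0$ as $\tau\to\infty$ is false — rather I would argue as in Theorem \ref{theorem1}: the determinant is analytic in $\tau$, vanishes at $\tau=0$ (there $\Xi_1=\Xi_2=I_n$), hence has only isolated zeros, and one checks it is not identically zero (e.g. by the eigenvalue-separation $\Re\{\lambda(\hat M_2)\}<\sigma<\Re\{\lambda(\hat M_1)\}$, which forces $\Xi_2^{-1}\Xi_1$ or a suitably scaled version to degenerate as $\tau\to\infty$ in a way that keeps $\Xi_1-\Xi_2$ nonsingular). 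Combining this with the invertibility of $A_0$ from Theorem \ref{theorem1} on the common full-measure set of $\tau$, the representation above shows $\hat x(t)\equiv x(t)$ for all $t\ge\tau$. The main obstacle is the bookkeeping in the third step: establishing non-vanishing of $|\Xi_1-\Xi_2|$ cleanly, since unlike the $A_0$ computation there is no immediate factorization — I expect to reduce it to the already-proven nonsingularity of the corresponding block of $A_0$ via a column/row transformation, rather than redoing the asymptotic analysis from scratch.
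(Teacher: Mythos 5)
Your proposal is correct and follows essentially the same route as the paper: express $U_i\bigl[\phi_i(t)-e^{\hat M_i\tau}\phi_i(t-\tau)\bigr]=x(t)-\Xi_i x(t-\tau)$ with $\Xi_i=U_ie^{\hat M_i\tau}U_i^{-1}$ by exploiting the zero augmented error blocks, read off $x(t)$ through $D_c$, and establish invertibility of $\begin{bmatrix}I_n&\Xi_1\\I_n&\Xi_2\end{bmatrix}$ by analyticity in $\tau$ plus the eigenvalue-separation asymptotics. The only detail to tighten in your third step is the factorization: the paper writes $|\Xi_1-\Xi_2|=|\Xi_1|\,\bigl|I_n-\Xi_1^{-1}\Xi_2\bigr|$ and uses $\Xi_1^{-1}\Xi_2=U_1e^{(\sigma I_n-\hat M_1)\tau}U_1^{-1}U_2e^{(\hat M_2-\sigma I_n)\tau}U_2^{-1}\rightarrow0$ as $\tau\rightarrow\infty$ (not $\Xi_2^{-1}\Xi_1$, which diverges), after which the isolated-zeros argument finishes exactly as you describe.
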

\begin{proof}
Note that $$v_i(t)-e^{M_i\tau}v_i(t-\tau)=T_ix(t)-e^{M_i\tau}T_ix(t-\tau).$$
We can calculate that
\begin{equation*}
\begin{aligned}
&\phi(t)-e^{\hat M\tau}\phi(t-\tau)
=\begin{bmatrix}U_1^{-1}x(t)-e^{\hat M_1\tau}U_1^{-1}x(t-\tau)\\
U_2^{-1}x(t)-e^{\hat M_2\tau}U_2^{-1}x(t-\tau)\end{bmatrix}.
\end{aligned}
\end{equation*}
And we have
\begin{equation*}
\begin{aligned}
\hat{x}(t)=&D_c\bar C_cx(t)-D_c
\begin{bmatrix}U_1e^{\hat M_1\tau}U_1^{-1}\\
U_2e^{\hat M_2\tau}U_2^{-1}\end{bmatrix}x(t-\tau)\\
=&x(t),
\end{aligned}
\end{equation*}
where the last equality is obtained by noticing that $D_c\bar C_c=I_n$ and $D_c
\begin{bmatrix}U_1e^{\hat M_1\tau}U_1^{-1}\\
U_2e^{\hat M_2\tau}U_2^{-1}\end{bmatrix}=0$.

The remaining is to show the existence of $D_c$, or the invertibility of the matrix $\begin{bmatrix}I_n&U_1e^{\hat M_1\tau}U_1^{-1}\\I_n&U_2e^{\hat M_2\tau}U_2^{-1}\end{bmatrix}$, which is equivalent to demonstrate $\begin{vmatrix}I_n&U_1e^{\hat M_1\tau} U_1^{-1}\\I_n&U_2e^{\hat M_2\tau} U_2^{-1}\end{vmatrix}\neq0$. Since
\begin{equation*}
\begin{aligned}
&\begin{bmatrix}I_n&U_1e^{\hat M_1\tau} U_1^{-1}\\I_n&U_2e^{\hat M_2\tau} U_2^{-1}\end{bmatrix}\\
=&\begin{bmatrix}I_n&0\\I_n&-I_n\end{bmatrix}\begin{bmatrix}I_n&U_1e^{\hat M_1\tau} U_1^{-1}\\0&U_1e^{\hat M_1\tau} U_1^{-1}-U_2e^{\hat M_2\tau} U_2^{-1}\end{bmatrix},
\end{aligned}
\end{equation*}
we have that
\begin{equation*}
\begin{aligned}
&\begin{vmatrix}I_n&U_1e^{\hat M_1\tau} U_1^{-1}\\I_n&U_2e^{\hat M_2\tau} U_2^{-1}\end{vmatrix}=(-1)^n\left|U_1e^{\hat M_1\tau}U_1^{-1}-U_2e^{\hat M_2\tau}U_2^{-1}\right|\\
=&(-1)^n\left|U_1e^{\hat M_1\tau}U_1^{-1}\right|
\left|I_n-U_1e^{-\hat M_1\tau}U_1^{-1}U_2e^{\hat M_2\tau}U_2^{-1}\right|.
\end{aligned}
\end{equation*}
On the one hand, we have
$$\left|U_1e^{\hat M_1\tau}U_1^{-1}\right|=\left|e^{\hat M_1\tau}\right|\neq0.
$$
On the other hand, note that
\begin{equation*}
\begin{aligned}
&U_1e^{-\hat M_1\tau}U_1^{-1}U_2e^{\hat M_2\tau}U_2^{-1}\\
=&U_1e^{(\sigma I_n-\hat M_1)\tau}U_1^{-1}U_2e^{(\hat M_2-\sigma I_n)\tau}U_2^{-1}.
\end{aligned}
\end{equation*}
Since $\Re_i\{\lambda(\hat M_2)\}<\sigma<\Re_j\{\lambda(\hat M_1)\},\forall i,j=1,\cdots,n$, we can obtain that $U_1e^{-\hat M_1\tau}U_1^{-1}U_2e^{\hat M_2\tau}U_2^{-1}\rightarrow0$ as $\tau\rightarrow\infty$, which in turn implies that as $\tau\rightarrow\infty$, $$\left|I_n-U_1e^{-\hat M_1\tau}U_1^{-1}U_2e^{\hat M_2\tau}U_2^{-1}\right|\rightarrow 1.$$ Thus, the determinant $\begin{vmatrix}I_n&U_1e^{\hat M_1\tau} U_1^{-1}\\I_n&U_2e^{\hat M_2\tau} U_2^{-1}\end{vmatrix}\neq0$ with sufficiently large $\tau$. In light of the fact that the determinant is an analytic function of $\tau$, it only has isolated zeros. Since $\begin{vmatrix}I_n&U_1e^{\hat M_1\tau} U_1^{-1}\\I_n&U_2e^{\hat M_2\tau} U_2^{-1}\end{vmatrix}=0$ when $\tau=0$, $D_c$ exists for almost all $\tau>0$.
Therefore, for almost all $\tau>0$, the minimal-order observer $\hat x$ in \dref{obs12} can estimate the state at appointed time $\tau$. $\hfill\blacksquare$
\end{proof}
\begin{remark}
Both the observers \dref{obs12} and \dref{obs13} can realize appointed-time estimation of the state of the linear system \dref{model} without the unknown input $w$. In comparison to the observer \dref{obs13} with the inverse of the $(6n-4m)\times(6n-4m)$ dimensional matrix $A_0$ to be calculated, the observer \dref{obs12} only needs to calculate the inverse of the $2n\times 2n$ dimensional matrix $\begin{bmatrix}I_n&U_1e^{\hat M_1\tau} U_1^{-1}\\I_n&U_2e^{\hat M_2\tau} U_2^{-1}\end{bmatrix}$.
\end{remark}

\subsection{Methodology of Observer \dref{obs12}}
Note that the structure of the minimal-order appointed-time observer \dref{obs12} is similar to that of the full-order observer \dref{obs10}. The intuitive explanation is given as follows.

Define $\tilde\phi=\phi-\begin{bmatrix}U_1^{-1}&~\\~&U_2^{-1}\end{bmatrix}\bar C_cx$, $\tilde{\phi}_1=[I_n~0]\tilde\phi$ and $\tilde{\phi}_2=[0~I_n]\tilde\phi$. Under the observer \dref{obs12}, we have $\tilde{\phi}_i=[\tilde{v}_i^T,\tilde\phi_{i2}^T]^T$ with $\tilde\phi_{i2}\equiv0$. Thus, the main idea of the observer \dref{obs12} is to introduce the extra $4m$ zero variables $\tilde\phi_{i2}(t)$ and $\tilde\phi_{i2}(t-\tau)$, and add the following $6m$ equations:
\begin{equation}\label{leq2}
\begin{aligned}
&Cx(t)+\tilde\phi_{i2}(t)=y(t)-Du(t),\\
&Cx(t-\tau)+\tilde\phi_{i2}(t-\tau)=y(t-\tau)-Du(t-\tau),\\
&\tilde\phi_{i2}(t)=e^{\bar M_i\tau}\tilde\phi_{i2}(t-\tau),~i=1,2,
\end{aligned}
\end{equation}
where $\bar M_i$ is chosen according to the same requirement of $M_i$.

Combining \dref{lineareq} and \dref{leq2} yields a system of $6n$ linear equations in $6n$ variables $x(t),x(t-\tau),\tilde\phi(t),\tilde\phi(t-\tau)$ as follows:
\begin{equation}\label{leq3}
\begin{aligned}
&\begin{bmatrix}U_1^{-1}&~\\~&U_2^{-1}\end{bmatrix}\bar C_cx(t)+\tilde\phi(t)=\phi(t),\\
&\begin{bmatrix}U_1^{-1}&~\\~&U_2^{-1}\end{bmatrix}\bar C_cx(t-\tau)+\tilde\phi(t-\tau)=\phi(t-\tau),\\
&\tilde\phi(t)=e^{\hat M\tau}\tilde\phi(t-\tau).
\end{aligned}
\end{equation}
Solving the above system of linear equations gives the appointed-time observer \dref{obs12}.
\begin{remark}
Both the full-order observer \dref{obs10} and the reduced-order observer \dref{obs12} is obtained by solving the solution of the system of $6n$ linear equations in $6n$ variables; see \dref{lineareq0} and \dref{leq3}. In light of the invertibility of the matrices $\begin{bmatrix}\bar C_c&e^{\bar A_c\tau}\bar C_c\end{bmatrix}$ and $\begin{bmatrix}I_n&U_1e^{\hat M_1\tau}U_1^{-1}\\I_n&U_2e^{\hat M_2\tau}U_2^{-1}\end{bmatrix}$, it is not difficult to verify that both the coefficient matrix
$\bar A_0{=}\begin{bmatrix}\bar C_c&0&I_{2n}&0\\0&\bar C_c&0&I_{2n}\\
0&0&I_{2n}&-e^{\bar A_c\tau}
\end{bmatrix}$ of the linear equations \dref{lineareq0} and the coefficient matrix $\hat A_0{=}\begin{bmatrix}U_1^{-1}&0&I_{n}&0&0&0\\U_2^{-1}&0&0&I_n&0&0\\
0&U_1^{-1}&0&0&I_{n}&0\\0&U_2^{-1}&0&0&0&I_{n}\\
0&0&I_{n}&0&-e^{\hat M_1\tau}&0\\0&0&0&I_{n}&0&-e^{\hat M_2\tau}
\end{bmatrix}$ of the linear equations \dref{leq3} are invertible, implying that both the linear equations \dref{lineareq0} and \dref{leq3} have the unique solution. Clearly,
the main distinction between observer \dref{obs10} and observer \dref{obs12} lies in the introducing of nonsingular transformation matrices $U_i$. Specifically, the observer $v_i$ and the measurement output $y-Du$ forms the variable $\phi_i=[v_i^T,(Cx)^T]^T$, which is to estimate $U_i^{-1}x$; while for the full-order observer \dref{obs10}, the variable $\bar v_i$ can realize exponential estimation of $x$. In this sense, the minimal-order appointed-time observer \dref{obs12} and the full-order appointed-time observer \dref{obs10} share the same design structure. Moreover, the appointed-time observer \dref{obs12} degenerates into the observer \dref{obs10} when $U_i=I_n$.
\end{remark}
\begin{remark}
It is worth noting that the coefficient matrix $A_0$ in linear equations \dref{leq1} has lower dimension compared with $\hat A_0$, but the expression of the observer \dref{obs13} is more complicated than the observer \dref{obs12}.
Such counterintuitive result is mainly caused by the special structure of the linear equations \dref{leq3}. Compared with the linear equations \dref{leq1}, the linear equations \dref{leq3} contains the extra $4m$ zero variables $\tilde{\phi}_{i2}(t),\tilde{\phi}_{i2}(t-\tau)$. Moreover, the measurement output equations $y=Cx+Du$ are used twice, and the relation between the zero variables $\tilde{\phi}_{i2}(t)$ and $\tilde{\phi}_{i2}(t-\tau)$ is also introduced to construct the linear equations \dref{leq3}; while the design of the extra matrices $\bar M_i$ plays a key role in deriving the observer \dref{obs12}.
\end{remark}

\section{Appointed-time Unknown Input Observers}\label{s3}
In this section, we further extend the observer \dref{obs12} into the minimal-order appointed-time observer for the linear system \dref{model} in presence of the unknown input $w$.

\subsection{Model Reconfiguration and Full-order Observers}
Since the unknown input $w$ acts upon both the dynamics and the measured output, we have to first get rid of it from the measured output. Let $\bar C=(I_m-FF^+)C$, and $$\bar y=(I_m-FF^+)(y-Du).$$ Note that
$$w=F^+(y-Cx-Du)+(I_q-F^+F)w,$$ which in turn gives
\begin{equation}\label{model1}
\begin{aligned}
\dot{x}&=\bar Ax+EF^+y+\hat Bu+\bar Ew,\\
\bar y&=\bar Cx,
\end{aligned}
\end{equation}
with $\bar A=A-EF^+C$, $\hat B=B-EF^+D$, and
$\bar E=E(I_q-F^+F)$. Let $G=I_n-\bar E(\bar C\bar E)^+\bar C$, and the following lemma is introduced.
\begin{lemma}[\cite{houm1994tac}]\label{lem1}
The two statements are equivalent:
\begin{itemize}
\item [$1^\circ$] $\text{rank}\begin{bmatrix}0&F\\F&CE\end{bmatrix}
=\text{rank}(F)+\text{rank}\begin{bmatrix}E\\F\end{bmatrix}$;
\item [$2^\circ$] $G\bar E=0$.
\end{itemize}
\end{lemma}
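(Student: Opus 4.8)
The plan is to reduce both $1^{\circ}$ and $2^{\circ}$ to the single scalar equality
$$\mathrm{rank}(\bar C\bar E)=\mathrm{rank}(\bar E),$$
which is a priori an inequality ``$\le$'' since $\bar C\bar E$ carries $\bar C$ as a left factor; the lemma then says that $1^{\circ}$ and $2^{\circ}$ each single out precisely the rank-matching case. (As a consistency check, when $F=0$ one has $F^{+}=0$, $\bar C=C$, $\bar E=E$, and this becomes the classical $\mathrm{rank}(CE)=\mathrm{rank}(E)$.) Throughout I would use only the defining relations $XX^{+}X=X$, $X^{+}XX^{+}=X^{+}$, from which $XX^{+}$ and $X^{+}X$ are idempotent with $\mathrm{range}(XX^{+})=\mathrm{range}(X)$ and $\mathrm{null}(X^{+}X)=\mathrm{null}(X)$; no orthogonality of these idempotents is needed.

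For the side $2^{\circ}$: write $K=\bar C\bar E$, so $G\bar E=\bar E-\bar E K^{+}K=\bar E\,\Pi$ with $\Pi:=I_{q}-K^{+}K$. Here $\Pi$ is idempotent and $\mathrm{range}(\Pi)=\mathrm{null}(K^{+}K)=\mathrm{null}(K)$. Hence $G\bar E=0$ iff $\bar E$ annihilates $\mathrm{null}(\bar C\bar E)$, i.e. iff $\mathrm{null}(\bar C\bar E)\subseteq\mathrm{null}(\bar E)$; since the reverse inclusion is automatic, this is $\mathrm{null}(\bar C\bar E)=\mathrm{null}(\bar E)$, equivalently (both matrices having $q$ columns) $\mathrm{rank}(\bar C\bar E)=\mathrm{rank}(\bar E)$.

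For the side $1^{\circ}$: two ingredients are needed. First, $\mathrm{rank}(\bar E)=\mathrm{rank}\!\begin{bmatrix}E\\F\end{bmatrix}-\mathrm{rank}(F)$, which follows from $\mathrm{range}(\bar E)=E\cdot\mathrm{null}(F)$ together with $\dim\big(E\cdot\mathrm{null}(F)\big)=\dim\mathrm{null}(F)-\dim\big(\mathrm{null}(E)\cap\mathrm{null}(F)\big)=(q-\mathrm{rank}F)-(q-\mathrm{rank}\!\begin{bmatrix}E\\F\end{bmatrix})$. Second, the block-rank identity $\mathrm{rank}\begin{bmatrix}0&F\\F&CE\end{bmatrix}=2\,\mathrm{rank}(F)+\mathrm{rank}(\bar C\bar E)$. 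To get it, left-multiply by the invertible $\begin{bmatrix}I&0\\-CEF^{+}&I\end{bmatrix}$ to replace $CE$ by $CE(I-F^{+}F)=C\bar E$, then right-multiply by $\begin{bmatrix}I&-F^{+}C\bar E\\0&I\end{bmatrix}$ to replace $C\bar E$ by $(I-FF^{+})C\bar E=\bar C\bar E$, reaching $\begin{bmatrix}0&F\\F&\bar C\bar E\end{bmatrix}$ with unchanged rank; then, using $FF^{+}\bar C\bar E=0$ and $\bar C\bar E F^{+}F=0$ and the splitting $\mathbb{R}^{q}=\mathrm{range}(F^{+}F)\oplus\mathrm{null}(F)$, the image of the associated map is the product of $\mathrm{range}(F)$ (first block, from the lower-left $F$) with $\mathrm{range}(F)\oplus\mathrm{range}(\bar C\bar E)$ (second block), the last sum being direct because $\mathrm{range}(\bar C\bar E)\subseteq\mathrm{range}(I-FF^{+})$ and $\mathrm{range}(I-FF^{+})\cap\mathrm{range}(FF^{+})=\{0\}$; this yields the stated count. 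Combining the two ingredients, $1^{\circ}$ reads $2\,\mathrm{rank}(F)+\mathrm{rank}(\bar C\bar E)=\mathrm{rank}(F)+\mathrm{rank}\!\begin{bmatrix}E\\F\end{bmatrix}$, i.e. $\mathrm{rank}(\bar C\bar E)=\mathrm{rank}(\bar E)$ — the same condition obtained from $2^{\circ}$ — so $1^{\circ}\Leftrightarrow 2^{\circ}$.

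The step I expect to be the main obstacle is the block-rank identity: the two block operations are immediate, but pinning down the \emph{exact} constant $2\,\mathrm{rank}(F)$ (not merely an inequality) requires carefully exploiting the cancellations $FF^{+}\bar C\bar E=0$, $\bar C\bar E F^{+}F=0$ and the complementary splittings $\mathrm{range}(F^{+}F)\oplus\mathrm{null}(F)=\mathbb{R}^{q}$, $\mathrm{range}(FF^{+})\oplus\mathrm{null}(FF^{+})=\mathbb{R}^{m}$, which hold for the reflexive generalized inverse used here although the idempotents need not be orthogonal projectors. A black-box alternative is to permute block rows and columns and invoke the Marsaglia--Styan rank formula $\mathrm{rank}\begin{bmatrix}A&B\\C&0\end{bmatrix}=\mathrm{rank}(B)+\mathrm{rank}(C)+\mathrm{rank}\big[(I-BB^{+})A(I-C^{+}C)\big]$ with $A=CE$ and $B=C=F$.
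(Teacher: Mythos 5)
Your proof is correct. Note first that the paper itself offers no proof of this lemma --- it is imported verbatim from Hou and M\"{u}ller (1994) as a citation --- so there is nothing in the source to compare against; what you have written is a valid self-contained argument. Your reduction of both conditions to the single identity $\mathrm{rank}(\bar C\bar E)=\mathrm{rank}(\bar E)$ is sound: the $2^{\circ}$ side is exactly an application of the paper's own Lemma~\ref{lem10} with $N=\bar E$, $M=\bar C\bar E$ (indeed the paper uses precisely this chain $\mathrm{rank}(\bar E)\leq\mathrm{rank}\begin{bmatrix}\bar E\\ \bar C\bar E\end{bmatrix}=\mathrm{rank}(\bar C\bar E)\leq\mathrm{rank}(\bar E)$ in its proof of Theorem~\ref{lem4}), and your two ingredients for the $1^{\circ}$ side both check out: the identity $\mathrm{rank}(\bar E)=\mathrm{rank}\begin{bmatrix}E\\F\end{bmatrix}-\mathrm{rank}(F)$ is the exact dual of the identity $\mathrm{rank}(\bar C)=\mathrm{rank}\begin{bmatrix}C&F\end{bmatrix}-\mathrm{rank}(F)$ that the paper derives by an explicit factorization just after Lemma~\ref{lem2}, and the block-rank count $\mathrm{rank}\begin{bmatrix}0&F\\F&CE\end{bmatrix}=2\,\mathrm{rank}(F)+\mathrm{rank}(\bar C\bar E)$ is correctly established: the two unit-triangular block operations do produce $\begin{bmatrix}0&F\\F&\bar C\bar E\end{bmatrix}$, and the cancellations $FF^{+}\bar C\bar E=0$, $\bar C\bar E F^{+}F=0$ together with the idempotent splittings make the image a genuine Cartesian product of $\mathrm{range}(F)$ with the direct sum $\mathrm{range}(F)\oplus\mathrm{range}(\bar C\bar E)$, which is where the exact constant $2\,\mathrm{rank}(F)$ comes from. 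You were also right to flag that only the reflexive g-inverse axioms are available, and your argument never uses orthogonality. The only blemish is cosmetic: in the image computation the first output block $Fv_{2}$ comes from the upper-right $F$, not the lower-left one, but this mislabeling does not affect the dimension count. The Marsaglia--Styan shortcut you mention would also work and is essentially the standard route to this lemma in the unknown-input-observer literature.
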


Let $\eta=Gx$. We have
\begin{equation*}
\begin{aligned}
&x=\eta+\bar E(\bar C\bar E)^+\bar y\\
=&\eta{+}\bar E(\bar C\bar E)^+(I_m{-}FF^+)y{-}\bar E(\bar C\bar E)^+(I_m{-}FF^+)Du.
\end{aligned}
\end{equation*}
The dynamics of $\eta$ are then given by
\begin{equation}\label{model20}
\begin{aligned}
\dot{\eta}&=G\bar A\eta+Hy+\bar Bu+G\bar Ew,\\
y_{\eta}&=\bar C\eta=(I_m-\bar C\bar E(\bar C\bar E)^+)\bar y,
\end{aligned}
\end{equation}
with $\bar B=G\hat B-G\bar A\bar E(\bar C\bar E)^+(I_m-FF^+)D$ and $H=GEF^++G\bar A\bar E(\bar C\bar E)^+(I_m-FF^+)$.
If condition $1^\circ$ in Lemma \ref{lem1} holds, the model \dref{model20} can be written as
\begin{equation}\label{model2}
\begin{aligned}
\dot{\eta}&=G\bar A\eta+Hy+\bar Bu,\\
y_{\eta}&=\bar C\eta=\hat C(y-Du),
\end{aligned}
\end{equation}
where $\hat C=(I_m-\bar C\bar E(\bar C\bar E)^+)(I_m-FF^+)$.

\begin{lemma}[\cite{houm1994tac}]\label{lem2}
Under condition $1^\circ$ in Lemma \ref{lem1}, the following two statements are equivalent:
\begin{itemize}
\item [$1^\circ$] $\text{rank}\begin{bmatrix}A-sI_n&E\\C&F\end{bmatrix}
=n+\text{rank}\begin{bmatrix}E\\F\end{bmatrix},~\forall s\in\mathbf{C}$;
\item [$2^\circ$] $(G\bar A,\bar C)$ is observable.
\end{itemize}
\end{lemma}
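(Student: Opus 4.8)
The plan is to translate statement $2^\circ$ into its Popov-Belevitch-Hautus (PBH) form — $(G\bar A,\bar C)$ is observable iff $\mathrm{rank}\begin{bmatrix}G\bar A-sI_n\\\bar C\end{bmatrix}=n$ for every $s\in\mathbf{C}$ — and then to prove the pointwise rank identity
\[
\mathrm{rank}\begin{bmatrix}A-sI_n&E\\C&F\end{bmatrix}=\mathrm{rank}\begin{bmatrix}E\\F\end{bmatrix}+\mathrm{rank}\begin{bmatrix}G\bar A-sI_n\\\bar C\end{bmatrix},\qquad s\in\mathbf{C}.
\]
Granting this, $1^\circ$ (the left side equals $n+\mathrm{rank}\begin{bmatrix}E\\F\end{bmatrix}$ for all $s$) holds iff the last term equals $n$ for all $s$, i.e. iff $2^\circ$ holds; so the identity delivers both implications at once. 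The identity itself will be assembled from the Rosenbrock pencil by multiplying by invertible block-triangular matrices and by detaching column blocks whose column space meets the span of the remaining columns only in $\{0\}$, so that ranks add.

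I would carry this out in four moves. (i) Multiply $\begin{bmatrix}A-sI_n&E\\C&F\end{bmatrix}$ on the left by $\begin{bmatrix}I_n&-EF^+\\0&I_m\end{bmatrix}$ and on the right by $\begin{bmatrix}I_n&0\\-F^+C&I_q\end{bmatrix}$; the generalized-inverse identities $F^+FF^+=F^+$, $FF^+F=F$ collapse this to $\begin{bmatrix}\bar A-sI_n&\bar E\\\bar C&F\end{bmatrix}$. (ii) Choose an invertible $W=[\,W_1\ \ W_2\,]$ with $W_1$ a basis of $\mathrm{Col}(F^+F)$ and $W_2$ a basis of $\ker F$ (legitimate because $F^+F$ is idempotent, so $\mathbf{R}^q=\mathrm{Col}(F^+F)\oplus\ker F$). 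Right-multiplying by $\mathrm{diag}(I_n,W)$ turns the $q$ right columns into $\begin{bmatrix}0&\bar{E}'\\FW_1&0\end{bmatrix}$, since $FW_2=0$, $\bar E W_1=E(I_q-F^+F)W_1=0$, and $\bar E W_2=EW_2=:\bar{E}'$ with $\mathrm{Col}(\bar{E}')=\mathrm{Col}(\bar E)$, while $FW_1$ has full column rank $\mathrm{rank}(F)$ and $\mathrm{Col}(FW_1)=\mathrm{Col}(F)$. The block $\begin{bmatrix}0\\FW_1\end{bmatrix}$ detaches additively: a vector in both column spaces is $\begin{bmatrix}(\bar A-sI_n)x+\bar{E}'z\\\bar Cx\end{bmatrix}$ with $(\bar A-sI_n)x+\bar{E}'z=0$ and $\bar Cx\in\mathrm{Col}(F)$, and $\bar Cx=(I_m-FF^+)Cx$ lying in $\mathrm{Col}(F)$ forces $\bar Cx=0$ (apply $FF^+$ and use $FF^+F=F$), so the vector is zero; hence $\mathrm{rank}\begin{bmatrix}A-sI_n&E\\C&F\end{bmatrix}=\mathrm{rank}(F)+\mathrm{rank}\begin{bmatrix}\bar A-sI_n&\bar{E}'\\\bar C&0\end{bmatrix}$. (iii) An elementary column operation subtracting $\begin{bmatrix}\bar{E}'\\0\end{bmatrix}K$, with $K$ chosen so that $\bar{E}'K=\bar E(\bar C\bar E)^+\bar C(\bar A-sI_n)$ (possible since $\mathrm{Col}(\bar{E}')=\mathrm{Col}(\bar E)$), replaces $\bar A-sI_n$ by $(I_n-\bar E(\bar C\bar E)^+\bar C)(\bar A-sI_n)=G(\bar A-sI_n)$ without touching $\bar C$. (iv) Detach $\begin{bmatrix}\bar{E}'\\0\end{bmatrix}$: this is the step that uses Lemma~\ref{lem1}, which gives $G\bar E=0$, hence $G\bar{E}'=G\bar E W_2=0$, so $\mathrm{Col}(\bar{E}')\subseteq\ker G$; on the other hand $\mathrm{Col}(G(\bar A-sI_n))\subseteq\mathrm{Range}(G)$, and $G$ is idempotent (from $(\bar C\bar E)^+(\bar C\bar E)(\bar C\bar E)^+=(\bar C\bar E)^+$), so $\mathrm{Range}(G)\cap\ker G=\{0\}$, forcing the relevant intersection of column spaces to vanish and giving $\mathrm{rank}\begin{bmatrix}\bar A-sI_n&\bar{E}'\\\bar C&0\end{bmatrix}=\mathrm{rank}(\bar E)+\mathrm{rank}\begin{bmatrix}G(\bar A-sI_n)\\\bar C\end{bmatrix}$. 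Finally $\mathrm{rank}(F)+\mathrm{rank}(\bar E)=\mathrm{rank}\begin{bmatrix}E\\F\end{bmatrix}$ (a dimension count for $x\mapsto(Ex,Fx)$ split along $\ker F$ and a complement, using $\mathrm{Col}(\bar E)=E\cdot\ker F$), and $\mathrm{rank}\begin{bmatrix}G(\bar A-sI_n)\\\bar C\end{bmatrix}=\mathrm{rank}\begin{bmatrix}G\bar A-sI_n\\\bar C\end{bmatrix}$ because $\bar Cx=0$ already forces $Gx=x$, so both stacked matrices have the same kernel. Assembling the four moves gives the identity, and the PBH step closes the proof.

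The routine checks — the two pseudo-inverse cancellations in move (i), the idempotency of $G$, and the direct-sum dimension counts — I would compress to one line each; it is worth noting they use only $NN^+N=N$ and $N^+NN^+=N^+$, so no Moore-Penrose/transpose structure is invoked. I expect the main obstacle to be move (ii): one must pick the splitting of $\mathbf{R}^q$ so that $\bar E$ is isolated exactly (it is the choice $W_1\subseteq\mathrm{Col}(F^+F)$, not an arbitrary complement of $\ker F$, that forces $\bar E W_1=0$), and one must verify that what survives next to $G(\bar A-sI_n)$ after the $F$-block is removed is precisely $\bar C$, which rests on the small fact $\mathrm{Col}(F)\cap\ker(FF^+)=\{0\}$. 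Move (iv) is the conceptual crux but is short once $G\bar E=0$ and idempotency of $G$ are available.
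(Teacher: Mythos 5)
The paper offers no proof of this lemma at all: it is imported verbatim from Hou and M\"{u}ller (1994), so there is no in-paper argument to measure yours against. Your proof is correct and self-contained modulo Lemma~\ref{lem1}, which enters exactly once, in move~(iv), through $G\bar E=0$. I verified the four moves: the cancellations $\bar EF^{+}=0$ and $EF^{+}\bar C=0$ in move~(i) do reduce the pencil to $\begin{bmatrix}\bar A-sI_n&\bar E\\ \bar C&F\end{bmatrix}$; the splitting $\mathbf{R}^q=\mathrm{Col}(F^{+}F)\oplus\ker F$ is legitimate because $F^{+}F$ is idempotent with $\ker(F^{+}F)=\ker F$, and taking $W_1$ inside $\mathrm{Col}(F^{+}F)$ is indeed what forces $\bar EW_1=0$; both detachment steps correctly establish trivial intersection of the two column spaces, which is the right criterion for ranks to add; the column operation in move~(iii) exists because $\mathrm{Col}\bigl(\bar E(\bar C\bar E)^{+}\bar C(\bar A-sI_n)\bigr)\subseteq\mathrm{Col}(\bar E)=\mathrm{Col}(\bar E')$; and the closing kernel identity rests on the sound observation that $\bar Cx=0$ implies $Gx=x$. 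When writing it up, state the rank-additivity criterion $\mathrm{rank}\begin{bmatrix}M_1&M_2\end{bmatrix}=\mathrm{rank}(M_1)+\mathrm{rank}(M_2)$ if and only if $\mathrm{Col}(M_1)\cap\mathrm{Col}(M_2)=\{0\}$ once explicitly, since you invoke it twice. A genuine bonus of your route is that you prove the pointwise identity
\begin{equation*}
\mathrm{rank}\begin{bmatrix}A-sI_n&E\\C&F\end{bmatrix}
=\mathrm{rank}\begin{bmatrix}E\\F\end{bmatrix}
+\mathrm{rank}\begin{bmatrix}G\bar A-sI_n\\ \bar C\end{bmatrix}
\end{equation*}
for each individual $s$, not merely the quantified equivalence: restricting $s$ to the closed right half-plane immediately yields the detectability variant the paper states in a remark, and your intermediate identity $\mathrm{rank}(\bar E)=\mathrm{rank}\begin{bmatrix}E\\F\end{bmatrix}-\mathrm{rank}(F)$ is the exact dual of the identity $\mathrm{rank}(\bar C)=\mathrm{rank}\begin{bmatrix}C&F\end{bmatrix}-\mathrm{rank}(F)$ that the paper derives separately after Lemma~\ref{lem2}.
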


Note that
$$
\begin{bmatrix}\bar C&0\\0&F\end{bmatrix}
=\begin{bmatrix}I_m-FF^+\\FF^+\end{bmatrix}\begin{bmatrix}C&F\end{bmatrix}
\begin{bmatrix}I_n&0\\-F^+C&I_q\end{bmatrix},
$$
implying that $\text{rank}(\bar C)=\text{rank}\begin{bmatrix}C&F\end{bmatrix}-\text{rank}(F)$.
By Lemma \ref{lem1} and Lemma \ref{lem2}, it is not difficult to derive that we can design the asymptotically convergent unknown input observer with minimal order $n-\text{rank}\begin{bmatrix}C&F\end{bmatrix}+\text{rank}(F)$, if the following assumption holds.
\begin{assumption}\label{assp1}
The system matrices satisfy the following two conditions:
\begin{itemize}
\item [$1^\circ$] $\text{rank}\begin{bmatrix}0&F\\F&CE\end{bmatrix}
=\text{rank}(F)+\text{rank}\begin{bmatrix}E\\F\end{bmatrix}$;
\item [$2^\circ$] $\text{rank}\begin{bmatrix}A-sI_n&E\\C&F\end{bmatrix}
=n+\text{rank}\begin{bmatrix}E\\F\end{bmatrix},~\forall s\in\mathbf{C}$.
\end{itemize}
\end{assumption}
\begin{remark}
It is revealed in \cite{houm1994tac} that $(G\bar A,\bar C)$ is detectable if and only if condition $2^\circ$ in Assumption \ref{assp1} holds for all $s$ in non-negative real part. For the case that $\text{rank}(\bar C)=n$, i.e., $\text{rank}(C)=n$ and $F=0$, the state $x$ can be directly calculated by measurement output $y$ as $x=(C^TC)^{-1}C^T(y-Du)$, which appears to be the trivial state estimation case. In this paper, we only consider the case when $\text{rank}(\bar C)<n$.
\end{remark}

Based on the pairwise observer structure of \cite{engel2002tac}, we can formulate the following observer:
\begin{equation}\label{observer00}
\begin{aligned}
\dot{\zeta}&=\bar A_c\zeta+\bar H_cy+\bar B_cu,\\
\hat{\eta}(t)&=\bar D_c[\zeta(t)-e^{\bar A_c\tau}\zeta(t-\tau)],\\
\bar{x}&=\hat\eta+\bar E(\bar C\bar E)^+(I_m-FF^+)(y-Du),
\end{aligned}
\end{equation}
where $\bar A_c=\text{diag}(G\bar A+K_1\bar C,G\bar A+K_2\bar C)$ with $K_1$ and $K_2$ being gain matrices such that $\Re_j \{\lambda(G\bar A+K_2\bar C)\}<\sigma<\Re_k\{\lambda(G\bar A+K_1\bar C)\}<0,\forall j,k=1,\cdots,n$, $\bar H_c=\begin{bmatrix}\bar H_{c1}\\ \bar H_{c2}\end{bmatrix}$ and $\bar B_c=\begin{bmatrix}\bar B_{c1}\\ \bar B_{c2}\end{bmatrix}$ with $\bar H_{ci}=H-K_i\hat C,\bar B_{ci}=\bar B+K_i\hat CD,i=1,2$, and $\bar D_c=\begin{bmatrix}I_n&0\end{bmatrix}\begin{bmatrix}\bar C_c&e^{\bar A_c\tau}\bar C_c\end{bmatrix}^{-1}$. Clearly, $\hat\eta(t)\equiv \eta(t),\forall t\geq\tau$, and we have the following result.
\begin{corollary}\label{cor1}
Suppose that Assumption \ref{assp1} holds. For almost all $\tau>0$, the $2n$-order observer \dref{observer00} can estimate the state $x(t)$ of the linear system \dref{model} in presence of the unknown input $w$ at appointed time $\tau$.
\end{corollary}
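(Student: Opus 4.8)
The plan is to derive Corollary~\ref{cor1} by transporting the appointed-time construction \dref{obs10} of \cite{engel2002tac} --- equivalently, the linear-equation argument of Section~\ref{s2} --- to the reconstructed, unknown-input-free model \dref{model2}. Accordingly the argument decomposes into three parts: (i) justifying that \dref{model2} is a legitimate equivalent of \dref{model} and that recovering $\eta$ suffices to recover $x$; (ii) checking observability of the pair driving \dref{model2}; (iii) showing the pairwise observer \dref{observer00} reproduces $\eta$ exactly at time $\tau$ for almost all $\tau>0$.

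For (i), condition~$1^\circ$ of Assumption~\ref{assp1} is, by Lemma~\ref{lem1}, equivalent to $G\bar E=0$; hence the term $G\bar Ew$ in \dref{model20} disappears and \dref{model2} holds with the algebraic output relation $\bar C\eta=\hat C(y-Du)$. Since $\eta=Gx$ and $x=\eta+\bar E(\bar C\bar E)^+(I_m-FF^+)(y-Du)$, the state $x$ is an affine function of $\eta$ and the measured signals, so it is enough to reconstruct $\eta(t)$ exactly for $t\ge\tau$. For (ii), condition~$2^\circ$ of Assumption~\ref{assp1} together with Lemma~\ref{lem2} yields that $(G\bar A,\bar C)$ is observable, so gain matrices $K_1,K_2$ with $\Re_j\{\lambda(G\bar A+K_2\bar C)\}<\sigma<\Re_k\{\lambda(G\bar A+K_1\bar C)\}<0$ exist, matching the requirement in \dref{observer00}.

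The substantive step (iii) is to check that $\zeta_1,\zeta_2$ are Luenberger observers for $\eta$ with homogeneous error dynamics. Setting $\tilde\zeta_i=\zeta_i-\eta$ and subtracting $\dot\eta=G\bar A\eta+Hy+\bar Bu$ from $\dot\zeta_i=(G\bar A+K_i\bar C)\zeta_i+(H-K_i\hat C)y+(\bar B+K_i\hat CD)u$, the forcing terms cancel through the identity $\hat C(y-Du)=\bar C\eta$, leaving $\dot{\tilde\zeta}_i=(G\bar A+K_i\bar C)\tilde\zeta_i$ and hence $\tilde\zeta_i(t)=e^{(G\bar A+K_i\bar C)\tau}\tilde\zeta_i(t-\tau)$. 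Stacking both observers, the relations $\bar C_c\eta(t)+\tilde\zeta(t)=\zeta(t)$, $\bar C_c\eta(t-\tau)+\tilde\zeta(t-\tau)=\zeta(t-\tau)$ and $\tilde\zeta(t)=e^{\bar A_c\tau}\tilde\zeta(t-\tau)$ constitute $6n$ linear equations in the $6n$ unknowns $\eta(t),\eta(t-\tau),\tilde\zeta(t),\tilde\zeta(t-\tau)$, exactly as in \dref{lineareq0}; their coefficient matrix is nonsingular precisely when $\begin{bmatrix}\bar C_c&e^{\bar A_c\tau}\bar C_c\end{bmatrix}$ is invertible, and in that case the $\eta(t)$-component of the unique solution is $\bar D_c[\zeta(t)-e^{\bar A_c\tau}\zeta(t-\tau)]=\hat\eta(t)$, so $\hat\eta(t)\equiv\eta(t)$ for $t\ge\tau$ and $\bar x(t)\equiv x(t)$ follows from the reconstruction formula in (i).

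What remains --- and what I expect to be the only point not already settled by Section~\ref{s2} --- is the invertibility of $\begin{bmatrix}\bar C_c&e^{\bar A_c\tau}\bar C_c\end{bmatrix}$ for almost all $\tau>0$, handled exactly as in the proofs of Theorems~\ref{theorem1} and~\ref{theorem2}. A block row reduction gives $\det\begin{bmatrix}\bar C_c&e^{\bar A_c\tau}\bar C_c\end{bmatrix}=\pm\det\!\big(e^{(G\bar A+K_1\bar C)\tau}\big)\det\!\big(I_n-e^{-(G\bar A+K_1\bar C)\tau}e^{(G\bar A+K_2\bar C)\tau}\big)$; the first factor is nonzero for every $\tau$, while writing $e^{-(G\bar A+K_1\bar C)\tau}e^{(G\bar A+K_2\bar C)\tau}=e^{(\sigma I_n-(G\bar A+K_1\bar C))\tau}e^{((G\bar A+K_2\bar C)-\sigma I_n)\tau}$ and invoking the spectral separation about $\sigma$ shows this product tends to $0$, so the second factor tends to $1$. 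Thus the determinant is nonzero for all sufficiently large $\tau$, hence not identically zero; being analytic in $\tau$ it has only isolated zeros, and since it vanishes at $\tau=0$ (where $\bar C_c$ and $e^{\bar A_c\cdot 0}\bar C_c$ coincide) it is nonzero for all $\tau>0$ outside a discrete set. Consequently \dref{observer00} achieves appointed-time reconstruction of $x$ for almost all $\tau>0$, and no genuinely new difficulty arises beyond the bookkeeping of the injection terms.
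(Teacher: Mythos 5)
Your proposal is correct and follows exactly the route the paper intends: decouple the unknown input via $G\bar E=0$ to obtain the model \dref{model2}, invoke Lemma~\ref{lem2} for observability of $(G\bar A,\bar C)$, and apply the pairwise Engel--Kreisselmeier construction with the standard determinant/analyticity argument for almost all $\tau>0$. The paper states the corollary without an explicit proof ("Clearly, $\hat\eta(t)\equiv\eta(t)$"), and your write-up simply supplies the details of that same argument, all of which check out.
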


\subsection{Reduced-order Observer Design}
The reduced-order appointed-time unknown input observer was designed in \cite{rafft2006mcca} for the linear system \dref{model} with $F=0$. This subsection intends to present the corresponding observer when $F\neq 0$.

Decompose $\bar E$ into $\bar E=\bar E_0\bar E_1$ with $\bar E_0\in\mathbf{R}^{n\times \text{rank}(\bar E)}$ being of full column rank. Then the model \dref{model1} is rewritten as
\begin{equation}\label{model10}
\begin{aligned}
\dot{x}&=\bar Ax+EF^+y+\hat Bu+\bar E_0\bar E_1w,\\
\bar y&=\bar Cx.
\end{aligned}
\end{equation}
Choose $\bar T_0{\in}\mathbf{R}^{n{\times} (n{-}\text{rank}(\bar E))}$ such that $\bar T{=}\begin{bmatrix}\bar T_0&\bar E_0\end{bmatrix}$ is invertible. Let $\bar T^{-1}{=}\begin{bmatrix}\bar T_1\\ \bar T_2\end{bmatrix}$ with $\bar T_1{\in}\mathbf{R}^{(n{-}\text{rank}(\bar E)){\times} n}$ and $\bar T_2{\in}\mathbf{R}^{\text{rank}(\bar E){\times} n}$. Then, we have $\bar T_1\bar E_0=0$. Further choose $\bar U_0\in\mathbf{R}^{m\times (m-\text{rank}(\bar E))}$ such that $\bar U=\begin{bmatrix}\bar U_0&\bar C\bar E_0\end{bmatrix}$ is invertible, and let $\bar U^{-1}=\begin{bmatrix}\bar U_1\\ \bar U_2\end{bmatrix}$. The existence of $\bar U$ can be derived by the following result.
\begin{theorem}\label{lem4}
Under Assumption \ref{assp1}, $\bar C\bar E_0$ is of full column rank, i.e.,
$$\text{rank}(\bar C\bar E_0)=\text{rank}(\bar E).$$
\end{theorem}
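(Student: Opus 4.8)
The plan is to reduce the full-column-rank assertion for $\bar C\bar E_0$ to a rank identity for $\bar C\bar E$, and then read that identity off from Lemma~\ref{lem1}. First I would note that $\bar E=\bar E_0\bar E_1$ is a full-rank factorization: since $\bar E_0$ has full column rank equal to $\text{rank}(\bar E)$ and $\bar E=\bar E_0\bar E_1$, the factor $\bar E_1$ must have full row rank $\text{rank}(\bar E)$ (otherwise $\text{rank}(\bar E_0\bar E_1)<\text{rank}(\bar E)$). Right-multiplication by a full-row-rank matrix preserves the rank of a product, so $\text{rank}(\bar C\bar E_0)=\text{rank}(\bar C\bar E_0\bar E_1)=\text{rank}(\bar C\bar E)$. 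On the other hand $\bar E_0$ has full column rank, hence $\text{rank}(\bar C\bar E_0)\le\text{rank}(\bar E_0)=\text{rank}(\bar E)$. Therefore $\bar C\bar E_0$ has full column rank if and only if $\text{rank}(\bar C\bar E)=\text{rank}(\bar E)$, and it suffices to establish this last equality.

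Next I would invoke condition $1^\circ$ of Assumption~\ref{assp1}, which is exactly statement $1^\circ$ of Lemma~\ref{lem1}; the lemma then yields $G\bar E=0$ with $G=I_n-\bar E(\bar C\bar E)^+\bar C$. Rewriting this as $\bar E=\bar E(\bar C\bar E)^+\bar C\bar E=\big(\bar E(\bar C\bar E)^+\big)(\bar C\bar E)$ exhibits $\bar E$ as a left multiple of $\bar C\bar E$, so $\text{rank}(\bar E)\le\text{rank}(\bar C\bar E)$. The reverse inequality is immediate, giving $\text{rank}(\bar C\bar E)=\text{rank}(\bar E)$, and combining with the first step proves $\text{rank}(\bar C\bar E_0)=\text{rank}(\bar E)$.

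I do not expect a serious obstacle here; the argument is essentially a one-line consequence of the generalized-inverse identity built into $G$ together with Lemma~\ref{lem1}. The one point I would be careful about is the full-rank-factorization claim $\text{rank}(\bar E_1)=\text{rank}(\bar E)$ (needed to justify $\text{rank}(\bar C\bar E_0)=\text{rank}(\bar C\bar E)$), and I would also remark that only condition $1^\circ$ of Assumption~\ref{assp1} is actually used in the proof, condition $2^\circ$ being irrelevant to this rank statement.
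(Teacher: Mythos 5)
Your proof is correct and takes essentially the same route as the paper's: both reduce the claim to showing $\text{rank}(\bar C\bar E)=\text{rank}(\bar E)$, obtain that from $G\bar E=0$ (condition $1^\circ$ via Lemma~\ref{lem1}), and then sandwich $\text{rank}(\bar C\bar E)\le\text{rank}(\bar C\bar E_0)\le\text{rank}(\bar E_0)=\text{rank}(\bar E)$. The only cosmetic difference is that you get $\text{rank}(\bar E)\le\text{rank}(\bar C\bar E)$ by directly factoring $\bar E=\bigl(\bar E(\bar C\bar E)^+\bigr)(\bar C\bar E)$, whereas the paper invokes Lemma~\ref{lem10}; these are the same fact.
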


Before proceeding, we first introduce the following lemma.
\begin{lemma}[\cite{houm1994tac}]\label{lem10}
For any matrices $M$ and $N$ with appropriate dimensions, $N(I-M^+M)=0$ if and only if $\text{rank}\begin{bmatrix}N\\M\end{bmatrix}=\text{rank}(M)$.
\end{lemma}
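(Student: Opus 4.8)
The plan is to reduce each of the two stated conditions to the single geometric fact that the row space of $N$ is contained in the row space of $M$; once both sides are recast in this form the equivalence is immediate. The only property of the generalized inverse I will invoke is the defining identity $MM^+M=M$, so nothing beyond the $\{1\}$-inverse axiom (let alone the full Moore--Penrose conditions) is required.

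First I would treat the forward implication. Assuming $N(I-M^+M)=0$, equivalently $N=NM^+M$, I factor the right-hand side as $N=(NM^+)M$. This exhibits every row of $N$ as a linear combination of the rows of $M$, so the row space of $N$ is contained in that of $M$. Consequently, stacking $N$ on top of $M$ cannot enlarge the row space, and therefore $\text{rank}\begin{bmatrix}N\\M\end{bmatrix}=\text{rank}(M)$.

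For the reverse implication I start from $\text{rank}\begin{bmatrix}N\\M\end{bmatrix}=\text{rank}(M)$. Since appending the rows of $N$ leaves the rank unchanged, each row of $N$ must already lie in the row space of $M$, so there exists a matrix $L$ with $N=LM$. Substituting and using $MM^+M=M$ gives $NM^+M=L(MM^+M)=LM=N$, which is exactly $N(I-M^+M)=0$.

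The argument carries no substantive obstacle; the one point deserving care is the standard equivalence, used in both directions, between the rank equality $\text{rank}\begin{bmatrix}N\\M\end{bmatrix}=\text{rank}(M)$ and the factorization $N=LM$, each being a restatement of row-space containment. I would flag explicitly that this rank--factorization equivalence is the only ``rank'' ingredient entering the proof, and that the single identity $MM^+M=M$ suffices throughout, so the lemma in fact holds for any $\{1\}$-inverse, not merely the reflexive generalized inverse fixed in the notation.
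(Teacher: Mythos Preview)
Your argument is correct. The paper does not supply its own proof of this lemma; it simply quotes the result from \cite{houm1994tac} and uses it as a tool in the proofs of Theorems~3 and~4. So there is no ``paper's proof'' to compare against here.

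That said, your write-up is exactly the standard justification one would expect: reduce both conditions to the statement that the row space of $N$ lies in the row space of $M$, using only $MM^+M=M$. Your observation that the result therefore holds for any $\{1\}$-inverse, not just the reflexive generalized inverse fixed in the paper's notation $A^+$ (which satisfies $AA^+A=A$ and $A^+AA^+=A^+$), is a nice bonus and is fully supported by your argument.
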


Now we are ready to demonstrate Theorem \ref{lem4}.

\prooft
Note that
$$\bar E(I_q-(\bar C\bar E)^+\bar C\bar E)=G\bar E=0.$$
In light of Lemma \ref{lem10}, we have
$$\text{rank}(\bar E)\leq\text{rank}\begin{bmatrix}\bar E\\ \bar C\bar E\end{bmatrix}=\text{rank}(\bar C\bar E)
\leq\text{rank}(\bar E).$$
Thus, the statement that $\bar C\bar E_0$ is of full column rank can be derived by noticing
$$\text{rank}(\bar E)=\text{rank}(\bar C\bar E)\leq\text{rank}(\bar C\bar E_0)
\leq\text{rank}(\bar E_0). ~~~~~~~\hfill\blacksquare$$

Let $\psi=\bar T_1x$. We have $x=(I_n-\bar E_0\bar U_2\bar C)\bar T_0\psi+\bar E_0\bar U_2\bar y=(I_n-\bar E_0\bar U_2\bar C)\bar T_0\psi+\bar E_0\bar U_2(I_m-FF^+)(y-Du)$. The dynamics of $\psi$ can be described by
\begin{equation}\label{model200}
\begin{aligned}
\dot{\psi}&=\bar T_1\bar A(I_n-\bar E_0\bar U_2\bar C)\bar T_0\psi+\hat Hy+\hat{\bar{B}}u,\\
y_{\psi}&=\bar U_1\bar C\bar T_0\psi=\bar U_1\bar y,
\end{aligned}
\end{equation}
where $\hat{\bar{B}}=\bar T_1\hat B-\bar T_1\bar A\bar E_0\bar U_2(I_m-FF^+)D$ and
$\hat H=\bar T_1EF^++\bar T_1\bar A\bar E_0\bar U_2(I_m-FF^+)$.

The reduced-order appointed-time unknown input observer can be designed by following the observer structure in \cite{rafft2006mcca} as follows.
\begin{equation}\label{observer01}
\begin{aligned}
&\dot{\hat{\zeta}}=\hat A_c\zeta+\hat H_cy+\hat B_cu,\\
&\hat{\psi}(t)=\hat D_c[\hat\zeta(t)-e^{\hat A_c\tau}\hat\zeta(t-\tau)],\\
&\hat{\bar{x}}{=}(I_n{-}\bar E_0\bar U_2\bar C)\bar T_0\hat{\psi}
{+}\bar E_0\bar U_2(I_m{-}FF^+)(y{-}Du),
\end{aligned}
\end{equation}
where $\hat A_c=\text{diag}(\hat A_{c1},\hat A_{c2})$, $\hat A_{ci}=\bar T_1\bar A(I_n-\bar E_0\bar U_2\bar C)\bar T_0+\hat K_i\bar U_1\bar C\bar T_0$ with $\hat K_i$ being the gain matrices such that $\Re_j \{\lambda(\hat A_{c2})\}<\sigma<\Re_k\{\lambda(\hat A_{c1})\}<0,j,k=1,\cdots,n-\text{rank}(\bar E)$, $\hat H_c=\begin{bmatrix}\hat H_{c1}\\ \hat H_{c2}\end{bmatrix}$ and $\hat B_c=\begin{bmatrix}\hat B_{c1}\\ \hat B_{c2}\end{bmatrix}$ with $\hat H_{ci}=\hat H-\hat K_i\bar U_1(I_m-FF^+),\bar B_{ci}=\hat B+\hat K_i\bar U_1(I_m-FF^+)D,i=1,2$, $\hat C_c=\begin{bmatrix}I_{n-\text{rank}(\bar E)}\\I_{n-\text{rank}(\bar E)}\end{bmatrix}$, and $\hat D_c=\begin{bmatrix}I_{n-\text{rank}(\bar E)}&0\end{bmatrix}\begin{bmatrix}\hat C_c&e^{\hat A_c\tau}\hat C_c\end{bmatrix}^{-1}$. It was revealed in \cite{rafft2006mcca} that $\hat{\psi}(t)\equiv \psi(t),\forall t\geq\tau$, and we have the following result.
\begin{corollary}\label{cor2}
Suppose that Assumption \ref{assp1} holds. For almost all $\tau>0$, the $2\bigg(n-\text{rank}\begin{bmatrix}E\\F\end{bmatrix}+\text{rank}(F)\bigg)$-order observer \dref{observer01} can estimate the state $x(t)$ of the linear system \dref{model} in presence of the unknown input $w$ at appointed time $\tau$.
\end{corollary}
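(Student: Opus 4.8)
The plan is to obtain Corollary~\ref{cor2} by assembling three facts: the announced order reduces to a rank identity; the reduced pair in \dref{model200} is observable, which lets us invoke the pairwise appointed-time mechanism; and the reconstruction formula for $x$ turns an estimate of $\psi$ into an estimate of $x$. For the order, observe that \dref{observer01} stacks two blocks $\hat\zeta_1,\hat\zeta_2$, each of dimension $n-\text{rank}(\bar E)$, so it suffices to show $\text{rank}(\bar E)=\text{rank}\begin{bmatrix}E\\F\end{bmatrix}-\text{rank}(F)$. Since $I_q-F^+F$ is idempotent with $\text{range}(I_q-F^+F)=\ker F$, the range of $\bar E=E(I_q-F^+F)$ is $E(\ker F)$, so $\text{rank}(\bar E)=\dim\ker F-\dim(\ker E\cap\ker F)=(q-\text{rank}(F))-\dim(\ker E\cap\ker F)$, whereas $\text{rank}\begin{bmatrix}E\\F\end{bmatrix}=q-\dim(\ker E\cap\ker F)$; subtracting yields the identity and hence the order $2\big(n-\text{rank}\begin{bmatrix}E\\F\end{bmatrix}+\text{rank}(F)\big)$.

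Next I would record the algebraic identities behind \dref{model200}: $\bar T\bar T^{-1}=I_n$ gives $\bar T_1\bar T_0=I$, $\bar T_1\bar E_0=0$, $\bar T_0\bar T_1+\bar E_0\bar T_2=I_n$; and $\bar U\bar U^{-1}=I_m$ (available because $\bar C\bar E_0$ has full column rank by Theorem~\ref{lem4}) gives $\bar U_1\bar C\bar E_0=0$, $\bar U_2\bar C\bar E_0=I$. Together with $\bar y=(I_m-FF^+)(y-Du)$, these give both the reconstruction formula $x=(I_n-\bar E_0\bar U_2\bar C)\bar T_0\psi+\bar E_0\bar U_2\bar y$ and, after left-multiplying the differentiated \dref{model10} by $\bar T_1$ (so $\bar T_1\bar E_0\bar E_1w=0$ removes $w$), the $\psi$-dynamics and output $y_\psi=\bar U_1\bar C\bar T_0\psi=\bar U_1\bar y$ in \dref{model200}. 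The key step is then to show the pair $\big(\bar T_1\bar A(I_n-\bar E_0\bar U_2\bar C)\bar T_0,\ \bar U_1\bar C\bar T_0\big)$ is observable, so that gains $\hat K_i$ realizing $\Re\{\lambda(\hat A_{c2})\}<\sigma<\Re\{\lambda(\hat A_{c1})\}<0$ can be chosen. I would establish this from Lemma~\ref{lem2} by the chain: (i) $G$ is idempotent with $G\bar E_0=0$, so the maps $\eta\mapsto\bar T_1\eta$ and $\psi\mapsto G\bar T_0\psi$ are mutually inverse isomorphisms between the $G\bar A$-invariant subspace $\text{range}(G)$ and the $\psi$-space, and carry $(G\bar A,\bar C)|_{\text{range}(G)}$ --- observable under Assumption~\ref{assp1} by Lemma~\ref{lem2}, since the restriction of an observable pair to an invariant subspace stays observable --- into $(\bar T_1\bar AG\bar T_0,\ \bar CG\bar T_0)$; (ii) $\bar CG\bar T_0$ and $\bar U_1\bar C\bar T_0$ have the same kernel, because both $\bar U_1$ and $I_m-\bar C\bar E(\bar C\bar E)^+\bar C$ have row space $\text{range}(\bar C\bar E_0)^\perp$, so $(\bar T_1\bar AG\bar T_0,\ \bar U_1\bar C\bar T_0)$ is observable; (iii) on the trajectories $(\bar U_2-(\bar C\bar E_0)^+)\bar C\bar T_0\psi=(\bar U_2-(\bar C\bar E_0)^+)\bar y$, so $\bar T_1\bar A(I_n-\bar E_0\bar U_2\bar C)\bar T_0$ differs from $\bar T_1\bar AG\bar T_0$ only by an output injection $\bar K\,\bar U_1\bar C\bar T_0$, which preserves observability. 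I expect this observability transfer to be the main obstacle; alternatively, one may cite the reduced-order unknown-input observer construction of \cite{houm1994tac} for the same conclusion.

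With the $\hat K_i$ so chosen, the rest follows the pattern of the earlier results. Setting $\tilde\zeta_i=\hat\zeta_i-\psi$, the definitions of $\hat H_{ci},\hat B_{ci}$ and $y_\psi=\bar U_1\bar y$ cancel the forcing terms, so $\dot{\tilde\zeta}_i=\hat A_{ci}\tilde\zeta_i$ decays exponentially; then $\psi(t),\psi(t-\tau)$ and $\tilde\zeta=[\tilde\zeta_1^T,\tilde\zeta_2^T]^T$ at $t$ and $t-\tau$ satisfy $\hat C_c\psi(t)+\tilde\zeta(t)=\hat\zeta(t)$, $\hat C_c\psi(t-\tau)+\tilde\zeta(t-\tau)=\hat\zeta(t-\tau)$, $\tilde\zeta(t)=e^{\hat A_c\tau}\tilde\zeta(t-\tau)$, a linear system that pins down $\psi(t)$ uniquely exactly when $\begin{bmatrix}\hat C_c&e^{\hat A_c\tau}\hat C_c\end{bmatrix}$ is invertible, i.e. when $|e^{\hat A_{c1}\tau}-e^{\hat A_{c2}\tau}|\neq0$. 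By the same argument used in the proof of Theorem~\ref{theorem2} --- $e^{-\hat A_{c1}\tau}e^{\hat A_{c2}\tau}\to0$ as $\tau\to\infty$ by the spectral separation, and the determinant is analytic in $\tau$ and vanishes at $\tau=0$ --- this holds for almost all $\tau>0$, giving $\hat\psi(t)\equiv\psi(t)$ for all $t\geq\tau$, which recovers the claim of \cite{rafft2006mcca}. Substituting into the last line of \dref{observer01} and using the reconstruction formula then yields $\hat{\bar x}(t)\equiv x(t)$ for all $t\geq\tau$, completing the proof.
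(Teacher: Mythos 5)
Your proposal is correct and follows the same route the paper takes implicitly: reconfigure the model to \dref{model200} so that the unknown input drops out, run the pairwise appointed-time observer of \cite{rafft2006mcca} on $\psi$, and recover $x$ through the static reconstruction formula --- indeed the paper states the corollary without proof, citing \cite{rafft2006mcca} for $\hat\psi(t)\equiv\psi(t)$, $t\geq\tau$. The details you supply beyond that --- the rank identity $\text{rank}(\bar E)=\text{rank}\begin{bmatrix}E\\F\end{bmatrix}-\text{rank}(F)$ for the order count, and especially the observability of the reduced pair $\big(\bar T_1\bar A(I_n-\bar E_0\bar U_2\bar C)\bar T_0,\ \bar U_1\bar C\bar T_0\big)$ via the isomorphism $\psi\mapsto G\bar T_0\psi$ and an output-injection argument --- are sound and fill in precisely what the paper delegates to \cite{houm1994tac} and \cite{rafft2006mcca}.
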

\begin{remark}
It should be pointed out that the order reduction of observer \dref{observer01}, compared with full-order observer \dref{observer00}, is mainly due to the different model reconfigurations. Specifically, to remove the effect of unknown input, model \dref{model2} adopts the variable $\eta$ with order $n$, and model \dref{model200} uses variable $\psi$ with order $n-\text{rank}\begin{bmatrix}E\\F\end{bmatrix}+\text{rank}(F)$. While the design structure and the principle of appointed-time estimation realization of the full-order observer \dref{observer00} and the reduced-order observer \dref{observer01} are the same.
\end{remark}

\subsection{The Gap Between Reduced-order and Minimal-order Observers}
Since the minimal-order asymptotical observer is of order $n-\text{rank}\begin{bmatrix}C&F\end{bmatrix}+\text{rank}(F)$, the minimal-order appointed-time unknown input observer is expected to be of order $2\left(n-\text{rank}\begin{bmatrix}C&F\end{bmatrix}+\text{rank}(F)\right)$.
The following result shows that the expected minimal order is strictly smaller than the order of observer \dref{observer01}.
\begin{theorem}\label{lem5}
Suppose that Assumption \ref{assp1} holds, and $\text{rank}(\bar C)<n$. Then, $$\text{rank}\begin{bmatrix}C&F\end{bmatrix}>\text{rank}\begin{bmatrix}E\\F\end{bmatrix}.$$
\end{theorem}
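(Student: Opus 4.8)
The plan is to rewrite the inequality in terms of $\bar C$ and $\bar E$ and then argue by contradiction, invoking the observability of $(G\bar A,\bar C)$ guaranteed by Assumption \ref{assp1}. First I would reduce the claim. The excerpt already records $\mathrm{rank}\begin{bmatrix}C&F\end{bmatrix}=\mathrm{rank}(\bar C)+\mathrm{rank}(F)$; symmetrically, left-multiplying $\begin{bmatrix}E\\F\end{bmatrix}$ by the invertible matrix $\begin{bmatrix}I_n&-EF^+\\0&I_m\end{bmatrix}$ produces $\begin{bmatrix}\bar E\\F\end{bmatrix}$, and since the rows of $\bar E=E(I_q-F^+F)$ and of $F$ span subspaces meeting only at the origin (they are governed by the complementary idempotents $I_q-F^+F$ and $F^+F$), one gets $\mathrm{rank}\begin{bmatrix}E\\F\end{bmatrix}=\mathrm{rank}(\bar E)+\mathrm{rank}(F)$. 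Hence the theorem is equivalent to $\mathrm{rank}(\bar C)>\mathrm{rank}(\bar E)$. Moreover $\mathrm{rank}(\bar E)\le\mathrm{rank}(\bar C)$ already: Theorem \ref{lem4} gives $\mathrm{rank}(\bar C\bar E_0)=\mathrm{rank}(\bar E)$, and the columns of $\bar C\bar E_0$ lie in the column space of $\bar C$. So it suffices to rule out equality.

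Suppose, for contradiction, $\mathrm{rank}(\bar E)=\mathrm{rank}(\bar C)$. Within the proof of Theorem \ref{lem4} it is shown that $\mathrm{rank}(\bar C\bar E)=\mathrm{rank}(\bar E)$ (from $G\bar E=0$ and Lemma \ref{lem10}); combined with the equality hypothesis and the obvious inclusion of the column space of $\bar C\bar E$ in that of $\bar C$, a dimension count forces these two column spaces to coincide. I would then identify $\mathrm{Im}\,G$ with $\ker\bar C$. The inclusion $\ker\bar C\subseteq\mathrm{Im}\,G$ is automatic, since $\bar Cx=0$ gives $Gx=x-\bar E(\bar C\bar E)^+\bar Cx=x$. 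For the reverse inclusion it is enough that $\bar CG=0$; but $\bar CG=\bigl(I_m-\bar C\bar E(\bar C\bar E)^+\bigr)\bar C$, and the idempotent $\bar C\bar E(\bar C\bar E)^+$ acts as the identity on its column space, which we just identified with the column space of $\bar C$, so $\bar CG=0$.

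To conclude, note that $\mathrm{rank}(\bar C)<n$ makes $\ker\bar C$, hence $\mathrm{Im}\,G$, a nonzero subspace. It is $G\bar A$-invariant, since $(G\bar A)v\in\mathrm{Im}\,G$ for every $v$, and it is contained in $\ker\bar C$ by the identification above. Thus $(G\bar A,\bar C)$ admits a nonzero invariant subspace inside $\ker\bar C$ and cannot be observable, contradicting condition $2^\circ$ of Assumption \ref{assp1} via Lemma \ref{lem2}. Therefore $\mathrm{rank}(\bar E)<\mathrm{rank}(\bar C)$, which is the desired inequality.

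The one genuinely non-routine step is the identification $\mathrm{Im}\,G=\ker\bar C$ forced by the equality hypothesis: this is what converts a plain rank statement into the existence of an unobservable $G\bar A$-invariant subspace, letting Assumption \ref{assp1} deliver the contradiction. The remaining ingredients — the two rank identities and the generalized-inverse manipulations — are standard bookkeeping.
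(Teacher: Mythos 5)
Your proof is correct, and it follows the paper's route up to the key intermediate fact $\bar C G=0$: like the paper, you reduce the claim to $\mathrm{rank}(\bar C)>\mathrm{rank}(\bar E)$, get $\geq$ from Theorem \ref{lem4}, and under the contradiction hypothesis derive $(I_m-\bar C\bar E(\bar C\bar E)^+)\bar C=0$ — the paper cites Lemma \ref{lem100} for this, while you argue it directly by identifying the column spaces of $\bar C\bar E$ and $\bar C$ and noting that $\bar C\bar E(\bar C\bar E)^+$ projects onto the former; these are the same fact. Where you genuinely diverge is the final contradiction. The paper uses the PBH test: it shows $\mathrm{rank}\begin{bmatrix}sI_n-G\bar A\\ \bar C\end{bmatrix}=n$ for $s\neq0$ forces $G\bar A$ to be nilpotent, then passes to the Jordan canonical form of $G\bar A$ and contradicts $\bar C G\bar A=0$ by tracking a nonzero column of $\bar C\Upsilon\Xi$. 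You instead exhibit the nonzero $G\bar A$-invariant subspace $\mathrm{Im}\,G=\ker\bar C$ sitting inside $\ker\bar C$ and invoke the invariant-subspace characterization of observability. Your route is shorter and avoids the Jordan-form bookkeeping; indeed, once $\bar CG=0$ one has $\bar C(G\bar A)^k=(\bar CG)\bar A(G\bar A)^{k-1}=0$ for all $k\geq1$, so the observability matrix of $(G\bar A,\bar C)$ has rank $\mathrm{rank}(\bar C)<n$, which is the most elementary phrasing of your observation. The paper's PBH computation does yield the extra structural fact that $G\bar A$ would have to be nilpotent, but that information is not needed for the theorem. Your justification of $\mathrm{rank}\begin{bmatrix}E\\F\end{bmatrix}=\mathrm{rank}(\bar E)+\mathrm{rank}(F)$ via the complementary idempotents $F^+F$ and $I_q-F^+F$ is also sound and is more detail than the paper supplies.
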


Before proceeding, we first introduce the following lemma, the proof of which is omitted since it is a dual result of Lemma \ref{lem10}.
\begin{lemma}\label{lem100}
For any matrices $M$ and $N$ with appropriate dimensions, $(I-MM^+)N=0$ if and only if $\text{rank}\begin{bmatrix}N&M\end{bmatrix}=\text{rank}(M)$.
\end{lemma}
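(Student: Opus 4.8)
The plan is to derive Lemma \ref{lem100} as the transpose dual of Lemma \ref{lem10}, exactly as the remark preceding its statement suggests. The starting observation is that the generalized inverse behaves well under transposition: if $M^+$ satisfies $MM^+M=M$ and $M^+MM^+=M^+$, then transposing these two identities shows that $(M^+)^T$ satisfies $M^T(M^+)^T M^T=M^T$ and $(M^+)^T M^T (M^+)^T=(M^+)^T$, so $(M^+)^T$ is itself an admissible generalized inverse of $M^T$ in the sense used throughout the paper. I would record this as the first step, since it is what licenses applying Lemma \ref{lem10} to the transposed data.

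Next I would transpose the hypothesis. The condition $(I-MM^+)N=0$ is equivalent to $N^T(I-MM^+)^T=0$, and since $(MM^+)^T=(M^+)^T M^T$, this reads $N^T\big(I-(M^+)^T M^T\big)=0$. Writing $\bar M=M^T$ and $\bar N=N^T$, and recalling from the first step that $(M^+)^T$ plays the role of a generalized inverse of $\bar M$, the left-hand side becomes $\bar N(I-\bar M^+\bar M)=0$, which is precisely the hypothesis appearing in Lemma \ref{lem10}. I would then invoke Lemma \ref{lem10} directly to conclude that this holds if and only if $\text{rank}\begin{bmatrix}\bar N\\ \bar M\end{bmatrix}=\text{rank}(\bar M)$.

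The final step is to translate this rank equality back to the original matrices using invariance of rank under transposition: $\text{rank}\begin{bmatrix}\bar N\\ \bar M\end{bmatrix}=\text{rank}\begin{bmatrix}N^T\\ M^T\end{bmatrix}=\text{rank}\big([N\ M]^T\big)=\text{rank}[N\ M]$ and $\text{rank}(\bar M)=\text{rank}(M)$. Chaining the equivalences then yields $(I-MM^+)N=0\iff\text{rank}[N\ M]=\text{rank}(M)$, as claimed.

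The only point requiring genuine care — and the one I expect to be the main obstacle — is the first step, namely confirming that $(M^+)^T$ qualifies as the generalized inverse of $M^T$ that Lemma \ref{lem10} implicitly assumes. Because the paper works with reflexive generalized inverses rather than the unique Moore-Penrose inverse, one cannot appeal to uniqueness; it is essential to verify that both defining identities transpose consistently so that the application of Lemma \ref{lem10} is legitimate. Once that is secured, the remainder is purely formal manipulation of transposes together with the rank-of-transpose identity. As an alternative self-contained route one could argue directly that $MM^+$ is idempotent with $\text{range}(MM^+)=\text{range}(M)$, whence $(I-MM^+)N=0$ is equivalent to $\text{range}(N)\subseteq\text{range}(M)$, i.e. to $\text{rank}[N\ M]=\text{rank}(M)$; but the transpose argument is shorter and matches the paper's stated derivation.
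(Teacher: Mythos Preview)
Your proposal is correct and matches the paper's approach exactly: the paper omits the proof entirely, stating only that Lemma~\ref{lem100} ``is a dual result of Lemma~\ref{lem10},'' and your argument simply spells out that duality via transposition. Your care in verifying that $(M^+)^T$ satisfies the two defining identities of a reflexive generalized inverse of $M^T$ is the right way to make the application of Lemma~\ref{lem10} rigorous.
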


Now we are ready to show the proof of Theorem \ref{lem5}.

\proofthm
Since $\text{rank}(\bar C)+\text{rank}(F)=\text{rank}\begin{bmatrix}C&F\end{bmatrix}$ and
$\text{rank}(\bar E)+\text{rank}(F)=\text{rank}\begin{bmatrix}E\\F\end{bmatrix}$, it is equivalent to showing that $\text{rank}(\bar C)>\text{rank}(\bar E)$.

By Theorem \ref{lem4}, we have $\text{rank}(\bar C)\geq\text{rank}(\bar E)$. We then prove that the equality cannot hold by contradiction. Assume that $\text{rank}(\bar C)=\text{rank}(\bar E)$, which implies that $\text{rank}\begin{bmatrix}\bar C&\bar C\bar E\end{bmatrix}=\text{rank}(\bar C\bar E)$. By Lemma \ref{lem100}, we have $0=(I-\bar C\bar E(\bar C\bar E)^+)\bar C=\bar CG$. For $s\neq0$,
\begin{equation*}
\begin{aligned}
n=&\text{rank}\begin{bmatrix}sI_n-G\bar A\\ \bar C\end{bmatrix}
=\text{rank}\begin{bmatrix}I_n&0\\ \bar C&-sI_m\end{bmatrix}\begin{bmatrix}sI_n-G\bar A\\ \bar C\end{bmatrix}\\
=&\text{rank}\begin{bmatrix}sI_n-G\bar A\\ 0\end{bmatrix},
\end{aligned}
\end{equation*}
implying that $G\bar A$ has only zero eigenvalues. Since $\text{rank}(\bar C)<n$, we can obtain that $G\bar A\neq0$. Let $\Xi=\text{diag}(\Xi_i)=\Upsilon^{-1}G\bar A\Upsilon$ be the Jordan canonical form of $G\bar A$ with $\Xi_1\neq0$. Note that
\begin{equation*}
\begin{aligned}
n{=}\text{rank}\begin{bmatrix}-G\bar A\\ \bar C\end{bmatrix}
{=}\text{rank}\begin{bmatrix}\Upsilon^{-1}&0\\0& I_m\end{bmatrix}
\begin{bmatrix}G\bar A\\ \bar C\end{bmatrix}\Upsilon
{=}\text{rank}\begin{bmatrix}\Xi\\ \bar C\Upsilon\end{bmatrix},
\end{aligned}
\end{equation*}
implying that the first column of $\bar C\Upsilon$ cannot be $0$. Then the second column of $\bar C\Upsilon\Xi$ is nonzero. However, $\bar C\Upsilon\Xi=\bar CG\bar A\Upsilon=0$. This completes the proof. $\hfill\blacksquare$

In view of Theorem \ref{lem5}, the order of the appointed-time observer can be further reduced.
In the following subsection, we intend to design the appointed-time unknown input observer of the minimal order $2\left(n-\text{rank}\begin{bmatrix}C&F\end{bmatrix}+\text{rank}(F)\right)$.

\subsection{Minimal-order Observer Design}
Since $\text{rank}(\bar C)=\text{rank}\begin{bmatrix}C&F\end{bmatrix}-\text{rank}(F)$, $\bar C$ can never be of full row rank if $F\neq0$.
Let $\bar C_0$ be the matrix with its rows consisting of a maximal linearly independent set of the rows of $\bar C$. Denote $\bar C_0=\bar C_1\bar C$. Clearly, under Assumption \ref{assp1}, $(G\bar A,\bar C_0)$ is observable. Let $y_{\eta_0}=\bar C_0\eta$. The dynamics of $\eta$ can be rewritten as
\begin{equation}\label{model3}
\begin{aligned}
\dot{\eta}&=G\bar A\eta+Hy+\bar Bu,\\
y_{\eta_0}&=\bar C_0\eta=\bar C_1\hat C(y-Du),
\end{aligned}
\end{equation}
Following the design steps of minimal-order appointed-time observer \dref{obs12}, we can formulate the appointed-time observer $\bar\eta(t)$ to estimate $\eta(t)$ at appointed time $\tau$ as follows:
\begin{equation}\label{observerz1}
\begin{aligned}
\dot{z}_1&=\mathcal{M}_1z_1+\mathcal{N}_1y+\hat {\mathcal{N}}_1u,\\
\dot{z}_2&=\mathcal{M}_2z_2+\mathcal{N}_2y+\hat {\mathcal{N}}_2u,\\
\bar\eta(t)&=\mathcal{D}\begin{bmatrix}\mathcal{U}_1&~\\~&\mathcal{U}_2\end{bmatrix}
\left[\bar\phi(t)-e^{\widehat{\mathcal{M}}\tau}\bar\phi(t-\tau)\right],
\end{aligned}
\end{equation}
where $z_i\in\mathbf{R}^{n-\text{rank}\begin{bmatrix}C&F\end{bmatrix}+\text{rank}(F)}$ are the observer states, $\bar\phi=[z_1^T,y_{\eta_0}^T,z_2^T,y_{\eta_0}^T]^T$,
$\mathcal{M}_i$ are stable matrices which have no common eigenvalues with $G\bar A$ and $\Re_j\{
\lambda(\mathcal{M}_2)\}<\sigma<\Re_k\{\lambda(\mathcal{M}_1)\}<0,j,k=1,\cdots,n-m$,
$\widehat{\mathcal{M}}=\begin{bmatrix}\widehat{\mathcal{M}}_1&~\\~&\widehat{\mathcal{M}}_2
\end{bmatrix}$ with $\widehat{\mathcal{M}}_i=\begin{bmatrix}{\mathcal{M}}_i&~\\~&\bar{{M}}_i\end{bmatrix}$,
$\mathcal{N}_i=\mathcal{H}_i\bar C_1\hat C+\mathcal{T}_iH$, $\hat{\mathcal{N}}_i=\mathcal{T}_i\bar B-\mathcal{H}_i\bar C_1\hat CD$, $\mathcal{H}_i$ are the gain matrices such that $(\mathcal{M}_i,\mathcal{H}_i)$ are controllable, $\mathcal{T}_i$ are the unique solutions of the Sylvester equations
\begin{equation}\label{sylvester}
\mathcal{T}_iG\bar A-\mathcal{M}_i\mathcal{T}_i=\mathcal{H}_i\bar C_0
\end{equation}
such that the matrices $\begin{bmatrix}\mathcal{T}_i\\ \bar C_0\end{bmatrix}$ are invertible with $\mathcal{U}_i=\begin{bmatrix}\mathcal{T}_i\\ \bar C_0\end{bmatrix}^{-1}$, and ${\mathcal{D}}=\begin{bmatrix}I_n&0\end{bmatrix}\begin{bmatrix}I_n&{\mathcal{U}}_1
e^{\widehat{\mathcal{M}}_1\tau} {\mathcal{U}}_1^{-1}\\I_n&{\mathcal{U}}_2e^{\widehat{\mathcal{M}}_2\tau} {\mathcal{U}}_2^{-1}\end{bmatrix}^{-1}$.

Then the minimal-order appointed-time unknown input observer is constructed by
\begin{equation}\label{observer}
\begin{aligned}
\hat{x}&=\bar\eta+\bar E(\bar C\bar E)^+(I_m-FF^+)(y-Du).
\end{aligned}
\end{equation}
\begin{theorem}\label{thmphi}
Suppose that Assumption \ref{assp1} holds. For almost all $\tau>0$, the minimal-order unknown input observer $\hat x$ presented in \dref{observer} can estimate the state $x(t)$ of the linear system \dref{model} in presence of the unknown input $w$ at appointed time $\tau$.
\end{theorem}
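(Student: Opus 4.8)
The plan is to reduce Theorem~\ref{thmphi} to Theorem~\ref{theorem2} applied to the $w$-free reconfigured system \dref{model3}, and then to invert the coordinate change $\eta=Gx$. To that end I would first check that \dref{model3} satisfies the hypotheses of Theorem~\ref{theorem2}. Under condition $1^\circ$ of Assumption~\ref{assp1}, Lemma~\ref{lem1} gives $G\bar E=0$, so $\eta=Gx$ obeys $\dot\eta=G\bar A\eta+Hy+\bar Bu$ with no dependence on $w$, exactly as in \dref{model2}; taking $\bar C_0=\bar C_1\bar C$ to be a maximal independent set of rows of $\bar C$ yields a full-row-rank output matrix with $\text{rank}(\bar C_0)=\text{rank}(\bar C)=\text{rank}\begin{bmatrix}C&F\end{bmatrix}-\text{rank}(F)$ (assumed $<n$, so the $z_i$ are of positive order), and $y_{\eta_0}=\bar C_0\eta=\bar C_1\hat C(y-Du)$ is a measured signal since $\bar C\eta=\hat C(y-Du)$ by \dref{model2}. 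Under condition $2^\circ$ of Assumption~\ref{assp1}, Lemma~\ref{lem2} makes $(G\bar A,\bar C)$ observable, hence $(G\bar A,\bar C_0)$ is observable because these two output matrices share a row space. Thus, for the purpose of Section~\ref{s2}, \dref{model3} is a linear system without unknown input, observable, with output dimension $\text{rank}\begin{bmatrix}C&F\end{bmatrix}-\text{rank}(F)$, so its expected minimal observer order is $2\big(n-\text{rank}\begin{bmatrix}C&F\end{bmatrix}+\text{rank}(F)\big)$, matching the order of the $z_i$ in \dref{observerz1}.

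Next I would identify \dref{observerz1} as the observer \dref{obs12} built for \dref{model3}: under the substitution $A\mapsto G\bar A$, $C\mapsto\bar C_0$, with the known forcing $Bu\mapsto Hy+\bar Bu$ and the measured output $y_{\eta_0}$ in place of $Cx$, the pair $(\mathcal M_i,\mathcal H_i)$, the Sylvester solution $\mathcal T_i$ of \dref{sylvester}, the transformation $\mathcal U_i=\begin{bmatrix}\mathcal T_i\\\bar C_0\end{bmatrix}^{-1}$, the blocks $\bar M_i,\widehat{\mathcal M}_i$, and $\mathcal D$ play the roles of $M_i,H_i,T_i,U_i,\bar M_i,\hat M_i,D_c$ respectively. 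A direct differentiation of $z_i-\mathcal T_i\eta$ using \dref{sylvester} shows $z_i\to\mathcal T_i\eta$ exponentially at rate governed by $\mathcal M_i$, so the $i$-th block $\bar\phi_i=[z_i^T,y_{\eta_0}^T]^T$ of $\bar\phi$ estimates $\mathcal U_i^{-1}\eta$, in perfect analogy with the role of $\phi_i$ in Section~\ref{s2}; the requirement that $\mathcal M_i$ share no eigenvalue with $G\bar A$ guarantees a unique $\mathcal T_i$ with $\begin{bmatrix}\mathcal T_i\\\bar C_0\end{bmatrix}$ invertible (as in Section~\ref{s2}, for generic controllable $\mathcal H_i$), and the spectral separation $\Re_j\{\lambda(\widehat{\mathcal M}_2)\}<\sigma<\Re_k\{\lambda(\widehat{\mathcal M}_1)\}<0$ — which is why $\bar M_i$ must obey the same $\sigma$-ordering as $\mathcal M_i$ — drives the determinant/analyticity argument of Theorem~\ref{theorem2} that makes $\mathcal D$ well defined for almost all $\tau>0$. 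Invoking Theorem~\ref{theorem2} then yields $\bar\eta(t)\equiv\eta(t)$ for all $t\ge\tau$, for almost all $\tau>0$.

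Finally I would invert the coordinate change: from $G=I_n-\bar E(\bar C\bar E)^+\bar C$ and $\bar Cx=\bar y=(I_m-FF^+)(y-Du)$ one gets $x=Gx+\bar E(\bar C\bar E)^+(I_m-FF^+)(y-Du)=\eta+\bar E(\bar C\bar E)^+(I_m-FF^+)(y-Du)$, so substituting $\bar\eta(t)=\eta(t)$, $t\ge\tau$, into \dref{observer} gives $\hat x(t)=x(t)$ for all $t\ge\tau$, which is the claim. The only nonroutine part is the bookkeeping in the middle step — precisely matching the data of \dref{model3} and \dref{observerz1} to the hypotheses of Theorem~\ref{theorem2} (observability and full row rank of $(G\bar A,\bar C_0)$, no spectral overlap of $\mathcal M_i$ with $G\bar A$, and the $\sigma$-ordering of $\widehat{\mathcal M}_1,\widehat{\mathcal M}_2$); once this identification is in place, the invertibility of the relevant coefficient matrix for almost all $\tau$ and the appointed-time convergence follow verbatim from Theorem~\ref{theorem2}, with no new analytic estimate required.
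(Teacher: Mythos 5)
Your proposal is correct and follows exactly the route the paper intends: the paper in fact omits an explicit proof of this theorem, stating only that \dref{observerz1} is obtained ``following the design steps of'' the observer \dref{obs12}, and your argument---verifying that the reconfigured system \dref{model3} is unknown-input-free with $(G\bar A,\bar C_0)$ observable and $\bar C_0$ of full row rank, invoking Theorem~\ref{theorem2} to get $\bar\eta(t)\equiv\eta(t)$ for $t\ge\tau$, and then recovering $x=\eta+\bar E(\bar C\bar E)^+(I_m-FF^+)(y-Du)$---is precisely the reasoning being left implicit. No gaps.
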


\begin{remark}
The common feature of the full-order observer \dref{observer00} and the reduced-order observer \dref{observer01} is that the two observer states are designed to asymptotically estimate the same variable. Specifically, in the full-order observer \dref{observer00}, both $\zeta_1$ and $\zeta_2$ are to estimate $\eta$; while in the reduced-order observer \dref{observer01}, both $\hat\zeta_1$ and $\hat\zeta_2$ are to estimate $\psi$. In the minimal-order observer however, two observer states, namely $z_1$ and $z_2$, are introduced to estimate different variables ($\mathcal{T}_1\eta$ and $\mathcal{T}_2\eta$, respectively).
Consequently, we cannot construct any appropriate variable to be estimated at appointed-time $\tau$ with only $z_i(t)$ and the delayed information $z_i(t-\tau)$. Instead, the variable $\eta$ is chosen to be estimated at appointed time $\tau$. To generate the observer $\bar\eta$, not only the information of observer states $z_i$, but also the output information $y_{\eta_0}$ at both time instants $t$ and $t-\tau$ are used.
\end{remark}

For the special case $F=0$, Assumption \ref{assp1} degenerates into the following assumption.
\begin{assumption}\label{assp2}
The system matrices satisfy the following two conditions:
\begin{itemize}
\item [$1^\circ.$] $\text{rank}(CE)
=\text{rank}(E)$;
\item [$2^\circ.$] $\text{rank}\begin{bmatrix}sI_n-A&E\\C&0\end{bmatrix}
=n+\text{rank}(E),~\forall s\in\mathbf{C}$.
\end{itemize}
\end{assumption}

Let $\breve G=I_n-E(CE)^+C$. It is revealed in \cite{lvyz2020tcns} that Assumption \ref{assp2} is equivalent to that $\breve GE=0$ and $(\breve GA,C)$ is observable. Without loss of generality, we assume that $C$ is of full row rank. Define $\breve\eta=\breve Gx$. Denote $y_{\breve{\eta}}=C\breve\eta$. We have
\begin{equation*}
\begin{aligned}
\dot{\breve{\eta}}&=\breve GA\breve\eta+\breve GAE(CE)^+y+(\breve GB-\breve GAE(CE)^+D)u,\\
y_{\breve{\eta}}&=C\breve\eta=(I_m-CE(CE)^+)(y-Du).
\end{aligned}
\end{equation*}
And the minimal-order appointed-time unknown input observer for the linear system \dref{model} with $F=0$ would be
\begin{equation}\label{observerz0}
\begin{aligned}
\dot{\breve{z}}_1&=\breve{\mathcal{M}}_1\breve z_1+\breve{\mathcal{N}}_1y+\breve{\hat {\mathcal{N}}}_1u,\\
\dot{\breve{z}}_2&=\breve{\mathcal{M}}_2\breve z_2+\breve{\mathcal{N}}_2y+\breve{\hat {\mathcal{N}}}_2u,\\
\breve{\bar{\eta}}(t)&=\breve{\mathcal{D}}\begin{bmatrix}\breve{\mathcal{U}}_1&~\\~&\breve{\mathcal{U}}_2\end{bmatrix}
\left[\breve{\bar{\phi}}(t)-e^{\check{\mathcal{M}}\tau}\breve{\bar{\phi}}(t-\tau)\right],\\
\breve{x}&=\breve{\bar{\eta}}+ E(CE)^+(y-Du),
\end{aligned}
\end{equation}
where $\breve z_i{\in}\mathbf{R}^{n{-}m}$ are observer states, $\breve{\bar{\phi}}{=}[\breve z_1^T,y_{\breve\eta}^T,\breve z_2^T,y_{\breve\eta}^T]^T$,
$\breve{\mathcal{M}}_i$ are stable matrices which have no common eigenvalues with $\breve GA$ and $\Re_j\{
\lambda(\breve{\mathcal{M}}_2)\}<\sigma<\Re_k\{\lambda(\breve{\mathcal{M}}_1)\}{<}0$, $j,k
=1,\cdots,n{-}m$,
$\check{\mathcal{M}}{=}\begin{bmatrix}\check{\mathcal{M}}_1&~\\~&\check{\mathcal{M}}_2\end{bmatrix}$, $\check{\mathcal{M}}_i{=}\begin{bmatrix}{\breve{\mathcal{M}}}_i&~\\~&\bar{{M}}_i\end{bmatrix}$,
$\breve{\mathcal{N}}_i=\breve{\mathcal{H}}_i(I_m-CE(CE)^+)+\breve{\mathcal{T}}_i\breve GAE(CE)^+$, $\breve{\hat{\mathcal{N}}}_i=\breve{\mathcal{T}}_i(\breve GB-\breve GAE(CE)^+D)-\breve{\mathcal{H}}_i(I_m-CE(CE)^+)D$, $\breve{\mathcal{H}}_i$ are gain matrices such that $(\breve{\mathcal{M}}_i,\breve{\mathcal{H}}_i)$ are controllable, $\breve{\mathcal{T}}_i$ are the unique solutions of the Sylvester equations
\begin{equation*}
\breve{\mathcal{T}}_i\breve G A-\breve{\mathcal{M}}_i\breve{\mathcal{T}}_i=\breve{\mathcal{H}}_i C
\end{equation*}
such that the matrices $\begin{bmatrix}\breve{\mathcal{T}}_i\\  C\end{bmatrix}$ are invertible with $\breve{\mathcal{U}}_i=\begin{bmatrix}\breve{\mathcal{T}}_i\\ C\end{bmatrix}^{-1}$, and $\breve{\mathcal{D}}=\begin{bmatrix}I_n&0\end{bmatrix}\begin{bmatrix}I_n&\breve{\mathcal{U}}_1
e^{\check{\mathcal{M}}_1\tau} \breve{\mathcal{U}}_1^{-1}\\I_n&\breve{\mathcal{U}}_2e^{\check{\mathcal{M}}_2\tau} \breve{\mathcal{U}}_2^{-1}\end{bmatrix}^{-1}$.
\begin{corollary}
Suppose that Assumption \ref{assp2} holds. For almost all $\tau>0$, the minimal-order unknown input observer $\breve x$ presented in \dref{observerz0} can estimate the state $x(t)$ of the linear system \dref{model} with $F=0$ at appointed time $\tau$.
\end{corollary}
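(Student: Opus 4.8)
The plan is to obtain this corollary from Theorem~\ref{thmphi} by specializing the whole construction of Section~\ref{s3} to $F=0$; essentially all that is needed is a bookkeeping check that each object in the general design collapses to its counterpart in \dref{observerz0}. First I would record what the model reconfiguration \dref{model1}--\dref{model2} becomes when $F=0$: then $F^+=0$ and $FF^+=0$, so $\bar C=(I_m-FF^+)C=C$, $\bar A=A-EF^+C=A$, $\hat B=B$, $\bar E=E(I_q-F^+F)=E$, and hence $G=I_n-\bar E(\bar C\bar E)^+\bar C$ becomes $\breve G=I_n-E(CE)^+C$. Consequently $H=GEF^++G\bar A\bar E(\bar C\bar E)^+(I_m-FF^+)$ reduces to $\breve GAE(CE)^+$, $\bar B$ reduces to $\breve GB-\breve GAE(CE)^+D$, and $\hat C=(I_m-\bar C\bar E(\bar C\bar E)^+)(I_m-FF^+)$ reduces to $I_m-CE(CE)^+$, so \dref{model2} becomes exactly the $\breve\eta$-model displayed just before \dref{observerz0}.

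Next I would check that Assumption~\ref{assp1} specializes to Assumption~\ref{assp2}: condition~$1^\circ$ becomes $\text{rank}(CE)=\text{rank}(E)$, and condition~$2^\circ$ becomes $\text{rank}\begin{bmatrix}A-sI_n&E\\C&0\end{bmatrix}=n+\text{rank}(E)$ for all $s\in\mathbf{C}$, which, up to row/column sign changes that do not affect the rank, is condition~$2^\circ$ of Assumption~\ref{assp2}; by the equivalence recalled in the text (from \cite{lvyz2020tcns}) this amounts to $\breve GE=0$ together with observability of $(\breve GA,C)$. Since $C$ is taken of full row rank and we are in the case $\text{rank}(\bar C)=\text{rank}(C)=m<n$, the matrix $\bar C$ is already of full row rank, so the minimal-order construction may be run with $\bar C_1=I_m$ and $\bar C_0=\bar C=C$; then $y_{\eta_0}=\bar C_1\hat C(y-Du)=(I_m-CE(CE)^+)(y-Du)=y_{\breve\eta}$, and the minimal order $n-\text{rank}\begin{bmatrix}C&F\end{bmatrix}+\text{rank}(F)$ equals $n-m$, matching $\breve z_i\in\mathbf{R}^{n-m}$.

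With these identifications the Sylvester equation \dref{sylvester}, the invertibility requirement on $\begin{bmatrix}\mathcal{T}_i\\\bar C_0\end{bmatrix}$, the data $\mathcal{N}_i,\hat{\mathcal{N}}_i,\widehat{\mathcal{M}},\mathcal{U}_i,\mathcal{D}$ of \dref{observerz1} and the reconstruction \dref{observer} coincide termwise with $\breve{\mathcal{T}}_i\breve GA-\breve{\mathcal{M}}_i\breve{\mathcal{T}}_i=\breve{\mathcal{H}}_iC$, the invertibility of $\begin{bmatrix}\breve{\mathcal{T}}_i\\C\end{bmatrix}$, the data $\breve{\mathcal{N}}_i,\breve{\hat{\mathcal{N}}}_i,\check{\mathcal{M}},\breve{\mathcal{U}}_i,\breve{\mathcal{D}}$ and $\breve x=\breve{\bar\eta}+E(CE)^+(y-Du)$ in \dref{observerz0}. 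Hence \dref{observerz0} is precisely the observer \dref{observerz1}--\dref{observer} evaluated at $F=0$, and Theorem~\ref{thmphi} then yields $\breve x(t)\equiv x(t)$ for all $t\ge\tau$ and almost all $\tau>0$. The only step needing genuine care --- the ``main obstacle'', such as it is --- is the purely algebraic verification that the generalized-inverse expressions really do collapse as claimed (for instance that the $\hat C$, $H$ and $\bar B$ terms shed their $F$-dependence cleanly), which must be carried out consistently with the conventions $F^+=0$ and $FF^+=0$ when $F=0$; none of this is deep, but it has to be done without slips.
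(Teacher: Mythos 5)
Your proposal is correct and matches the paper's intent exactly: the corollary is stated without a separate proof precisely because it is the specialization of Theorem~\ref{thmphi} and the construction \dref{observerz1}--\dref{observer} to $F=0$ (with $F^+=0$, $\bar C=C$ of full row rank, $\bar C_1=I_m$, and the minimal order collapsing to $n-m$), which is the bookkeeping you carry out. No gaps.
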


\section{Applications on Attack-free Protocol Design for Multi-agent Systems}\label{s4}
In this section, we intend to apply the minimal-order appointed-time unknown input observer \dref{observerz0} into the consensus problem of multi-agent systems.

Consider the linear multi-agent system consisting of $N$ agents, whose dynamics are given by
\begin{equation}\label{modelm}
\begin{aligned}
&\dot{x}_i=\breve Ax_i+\breve Bu_i,\\
&y_i=\breve Cx_i,~i=1,\cdots,N,
\end{aligned}
\end{equation}
where $x_i\in\mathbf{R}^n$, $y_i\in\mathbf{R}^m$ and $u_i\in\mathbf{R}^p$ are respectively the state, the output and the control input of the $i$-th agent.

The communication topology among the $N$ agents is described by a directed graph $\mathcal{G}=\{\mathcal{V},\mathcal{E}\}$, where $\mathcal{V}=\{1,\cdots,N\}$ is the node set and $\mathcal{E}\subset\mathcal{V}\times\mathcal{V}$ is the edge set. The adjacency matrix $\mathcal{A}=[a_{ij}]$ is defined by $a_{ij}=1$ if $(i,j)\in\mathcal{E}$, and $a_{ij}=0$ otherwise. The Laplacian matrix $\mathcal{L}=[l_{ij}]$ is defined by $l_{ii}=\sum_{j=1}^Na_{ij}$, and $l_{ij}=-a_{ij}$ for $j\neq i$.

In this section, we aim at designing fully distributed adaptive attack-free output-feedback protocol to realize consensus of the $N$ agents in \dref{modelm}, which is formulated in our previous work \cite{lvyz2020auto}.
\begin{problem}[Attack-free protocols]\label{problem}
Design fully distributed output-feedback protocol in the form
\begin{equation}\label{pro}
\begin{aligned}
\dot{\chi}_i=&f_i\left(\sum_{j=1}^Na_{ij}(y_i-y_j),\chi_i\right),\\
u_i=&g_i\left(\sum_{j=1}^Na_{ij}(y_i-y_j),\chi_i\right),
\end{aligned}
\end{equation}
with $f_i$ and $g_i$ as the nonlinear functions such that $\lim_{t\rightarrow \infty}\|x_i(t)- x_j(t)\|=0$,
$\forall i,j$.
\end{problem}

The protocol in Problem \ref{problem} possesses the feature that the observer information exchange among neighboring agents via communication channel is not required, and only relative output $y_i-y_j$ measured by local sensors is used. In this manner, the information transmission via network is removed, which has the advantage of saving communication cost and makes the protocol free from network attacks.

Define the consensus error as $\xi_i=\sum_{j=1}^Na_{ij}(x_i-x_j)$, and the consensus is realized if and only if $\xi_i$ converges to zero. \cite{lvyz2020tcns} designed the attack-free protocols
by introducing the distributed full-order and reduced-order appointed-time unknown input observers to estimate the consensus error, where the relative control input is viewed as the unknown input. The following assumption is made in \cite{lvyz2020tcns}.
\begin{assumption}\label{assp3}
The triple $(\breve A,\breve B,\breve C)$ satisfies:
\begin{itemize}
\item [$1^\circ$] $\text{rank}(\breve C\breve B)
=\text{rank}(\breve B)=p$;
\item [$2^\circ$] $\text{rank}\begin{bmatrix}sI_n-\breve A&\breve B\\ \breve C&0\end{bmatrix}
=n,~\forall s\in\mathbf{C}$.
\end{itemize}
\end{assumption}

Based on the minimal-order appointed-time unknown input observer \dref{observerz0}, the following distributed adaptive attack-free protocol is proposed:
\begin{equation}\label{protocol}
\begin{aligned}
\dot{\varpi}_{i1}=&\bar{\mathcal{M}}_1{\varpi}_{i1}+\bar{\mathcal{N}}_1\sum_{j=1}^Na_{ij}(y_i-y_j),\\
\dot{\varpi}_{i2}=&\bar{\mathcal{M}}_2{\varpi}_{i2}+\bar{\mathcal{N}}_2\sum_{j=1}^Na_{ij}(y_i-y_j),\\
\bar \xi_i(t)=&\bar{\mathcal{D}}\begin{bmatrix}\bar{\mathcal{U}}_1&~\\~&\bar{\mathcal{U}}_2\end{bmatrix}
\begin{bmatrix}\bar\varpi_{i1}(t)-e^{\widetilde{\mathcal{M}}_1\tau}\bar\varpi_{i1}(t-\tau)\\
\bar\varpi_{i2}(t)-e^{\widetilde{\mathcal{M}}_2\tau}\bar\varpi_{i2}(t-\tau)\end{bmatrix},\\
\hat\xi_i(t)=&\bar\xi_i+\breve B(\breve C\breve B)^+\sum_{j=1}^Na_{ij}(y_i-y_j),\\
u_i=&\begin{cases}0,~~~~~~~~~~~~~~~~~~~~~~~~~~~~~~~t<\tau,\\
-(\rho_i+\hat\xi_i^TP\hat\xi_i)\breve B^TP\hat\xi_i,~~t\geq\tau,\end{cases}\\
\dot{\rho}_i=&\begin{cases}0,~~~~~~~~~~~~~~~~~~~t<\tau,\\
\hat\xi_i^TP\breve B\breve B^TP\hat\xi_i,~~t\geq\tau,\end{cases}
\end{aligned}
\end{equation}
where $\bar\varpi_{ik}=[\varpi_{ik}^T,[(I_m-\breve C\breve B(\breve C\breve B)^+)\sum_{j=1}^Na_{ij}(y_i-y_j)]^T]^T$, $\bar{\mathcal{M}}_k$ are stable matrices which have no common eigenvalues with $\hat GA$ and satisfy $\Re_{l_2}\{\lambda(\bar{\mathcal{M}}_2)\}<\sigma<\Re_{l_1}\{\lambda(\bar{\mathcal{M}}_1)\}<0,
l_1,l_2=1,\cdots,n-m$, $\hat G=I_n-\breve B(\breve C\breve B)^+\breve C$, $\bar{\mathcal{N}}_k=\bar{\mathcal{T}}_k\hat GA\breve B(\breve C\breve B)^++\bar{\mathcal{H}}_k(I_m-\breve C\breve B(\breve C\breve B)^+)$,
$\bar{\mathcal{H}}_k$ are gain matrices such that $(\bar{\mathcal{M}}_k,\bar{\mathcal{H}}_k)$ are controllable, $\bar{\mathcal{U}}_k
=\begin{bmatrix}\bar{\mathcal{T}}_k\\ \breve C\end{bmatrix}^{-1}$ with $\bar{\mathcal{T}}_k$ being the unique solution to the Sylvester equation
\begin{equation}\label{sylvester10}
\bar{\mathcal{T}}_k\hat G\breve A-\bar{\mathcal{M}}_k\bar{\mathcal{T}}_k=\bar{\mathcal{H}}_k \breve C,
\end{equation}
$\widetilde{\mathcal{M}}_k=\text{diag}(\bar{\mathcal{M}}_k,\bar{M}_k)$, $\bar{\mathcal{D}}=\begin{bmatrix}I_n&0\end{bmatrix}\begin{bmatrix}I_n&\bar{\mathcal{U}}_1
e^{\widetilde{\mathcal{M}}_1\tau} \bar{\mathcal{U}}_1^{-1}\\I_n&\bar{\mathcal{U}}_2e^{\widetilde{\mathcal{M}}_2\tau} \bar{\mathcal{U}}_2^{-1}\end{bmatrix}^{-1}$, $\rho_i$ is the adaptive gain with initial value $\rho_i>0$, and $P$ is a positive definite matrix whose inverse $Q=P^{-1}$ is the solution to the LMI:
\begin{equation}\label{lmi}
AQ+QA^T-2BB^T<0.
\end{equation}

It is clear that $\hat\xi_i(t)\equiv\xi_i(t),\forall t\geq\tau$. And we have the following result in light of the consensus realization under the fully distributed adaptive state-feedback protocol [\cite{lv2022scis}].
\begin{corollary}
Suppose that Assumption \ref{assp3} holds, and the communication topology $\mathcal{G}$ contains a directed spanning tree. The consensus of the $N$ agents in \dref{modelm} can be achieved under the fully distributed adaptive attack-free output-feedback protocol \dref{protocol}.
\end{corollary}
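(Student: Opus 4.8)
The plan is to split the time axis at the appointed time $\tau$ and reduce the claim to two already-established facts: first, that the observer part of \dref{protocol} reconstructs the true consensus error exactly for $t\ge\tau$; and second, that the resulting control law coincides, for $t\ge\tau$, with a fully distributed adaptive \emph{state}-feedback protocol whose consensus property is known from \cite{lv2022scis}. In other words, once the observer delivers $\hat\xi_i(t)\equiv\xi_i(t)$ after $\tau$, the output-feedback problem collapses to a state-feedback problem started from a finite initial condition at time $\tau$.

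First I would write out the consensus-error dynamics. Differentiating $\xi_i=\sum_{j=1}^N a_{ij}(x_i-x_j)$ along \dref{modelm} gives $\dot\xi_i=\breve A\xi_i+\breve B\sum_{j=1}^N a_{ij}(u_i-u_j)$, while the locally measurable relative output satisfies $\sum_{j=1}^N a_{ij}(y_i-y_j)=\breve C\xi_i$. Viewing the relative control input $w_i:=\sum_{j=1}^N a_{ij}(u_i-u_j)$ as an unknown input (agent $i$ has no access to its neighbors' inputs), each $\xi_i$ obeys a \emph{standalone} unknown-input system of the form \dref{model} with $(\breve A,\breve B,\breve C)$ playing the roles of $(A,E,C)$, with $F=0$ and no known-input term; crucially, the inter-agent coupling is entirely absorbed into $w_i$, so the observers do not interact. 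Under this identification Assumption \ref{assp3} is precisely Assumption \ref{assp2} for this system: condition $1^\circ$ reads $\text{rank}(\breve C\breve B)=\text{rank}(\breve B)=p$, and condition $2^\circ$ is the corresponding no-invariant-zero (Rosenbrock) requirement guaranteeing observability of the decoupled pair. The equations for $\varpi_{i1},\varpi_{i2},\bar\xi_i,\hat\xi_i$ in \dref{protocol} are then exactly the instance of \dref{observerz0} obtained by setting $\hat G=I_n-\breve B(\breve C\breve B)^+\breve C$ and feeding $\sum_{j=1}^N a_{ij}(y_i-y_j)$ in place of $y-Du$. The appointed-time estimation property established for \dref{observerz0} therefore yields $\hat\xi_i(t)\equiv\xi_i(t)$ for all $t\ge\tau$ and almost all $\tau>0$, as already noted preceding the statement.

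Next I would exploit this exact reconstruction. On $[0,\tau)$ the protocol sets $u_i\equiv0$, so $\dot\xi_i=\breve A\xi_i$ and $\xi_i(\tau)=e^{\breve A\tau}\xi_i(0)$ is finite, while each adaptive gain stays at its positive initial value. For $t\ge\tau$, substituting $\hat\xi_i=\xi_i$ turns \dref{protocol} into $u_i=-(\rho_i+\xi_i^TP\xi_i)\breve B^TP\xi_i$ with $\dot\rho_i=\xi_i^TP\breve B\breve B^TP\xi_i$, which is exactly the fully distributed adaptive state-feedback consensus protocol of \cite{lv2022scis}, driven by the true consensus error and using the gain $P=Q^{-1}$ with $Q$ solving the LMI \dref{lmi}. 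Invoking that result with initial time $\tau$, finite initial data $(\xi(\tau),\rho(\tau))$, and the hypothesis that $\mathcal G$ contains a directed spanning tree gives $\lim_{t\to\infty}\xi_i(t)=0$ for every $i$, which is equivalent to $\lim_{t\to\infty}\|x_i(t)-x_j(t)\|=0$ for all $i,j$, establishing consensus.

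I expect the main obstacle to lie in the second step rather than the bookkeeping of the first. One must confirm that the cited state-feedback result \cite{lv2022scis} applies to a \emph{general} directed graph possessing a spanning tree (not merely an undirected or leader-follower topology), since fully distributed adaptive consensus with monotonically increasing gains is delicate on digraphs; and one must verify that restarting the Lyapunov analysis at $t=\tau$ with the exact finite state $\xi(\tau)$ and the accumulated gains causes no loss, i.e., that the switch from the quiescent phase $u_i\equiv0$ to the adaptive law is benign precisely because $\hat\xi_i=\xi_i$ holds \emph{identically} from $\tau$ onward. By contrast, the decoupling verification in the first step, that treating $w_i$ as an unknown input produces a self-contained per-agent observer satisfying Assumption \ref{assp2}, is routine once the error dynamics are written out.
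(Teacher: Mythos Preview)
Your proposal is correct and follows exactly the paper's approach: the paper's entire argument is the single sentence preceding the corollary, namely that $\hat\xi_i(t)\equiv\xi_i(t)$ for $t\ge\tau$ and hence the result follows from the fully distributed adaptive state-feedback consensus result of \cite{lv2022scis}. Your write-up simply spells out the two ingredients (per-agent unknown-input observer identification, then reduction to state feedback after $\tau$) that the paper leaves implicit, and your flagged concern about the applicability of \cite{lv2022scis} on general digraphs with a spanning tree is precisely the point that result is meant to cover.
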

\begin{remark}
Compared with the protocols presented in \cite{lvyz2020tcns}, the distributed adaptive attack-free protocol \dref{protocol} has a lower order and saves the computation cost, as it
adopts the minimal-order observer to estimate the exact value of consensus error at the appointed time. It should be clarified that, to realize appointed-time estimation of the consensus error with a relatively small $\tau$, the overshooting of $\hat\xi_i$ during the time period $[0,\tau)$ is surely to be extremely large, which in turn results in the over-threshold of the control input due to the limited control ability.
To overcome the above limitation existing in \cite{lvyz2020tcns}, zero input is introduced during the time period $[0,\tau)$ and the protocol \dref{protocol} is a piecewise controller. Meanwhile, the consensus error would not be far from the initial value since $\tau$ can be chosen arbitrarily small, which helps avoid input saturation.
\end{remark}
\begin{remark}
It should be noticed that in the construction of protocol \dref{protocol}, the design of the minimal-order appointed-time observer and the control input is decoupled. Consequently, the minimal-order appointed-time observer \dref{observer} can be easily implemented into the consensus and formation problems for multi-agent systems with uncertainties and nonlinearities, or the attack detection and identification of cyber-physical systems.
\end{remark}

\begin{example}
For a multi-agent system consisting of six agents whose dynamics are described by \dref{modelm} with
\begin{equation*}
\begin{aligned}
&\breve A=\begin{bmatrix}
0   &  1   &  0\\
     1  &  -1 &    1\\
     0  &  -8  &   1
\end{bmatrix},~
\breve B=\begin{bmatrix}0\\0\\1\end{bmatrix},
\breve C=\begin{bmatrix}1&0&0\\0&0&1\end{bmatrix},
\end{aligned}
\end{equation*}
the communication graph is depicted in Fig. \ref{fig1}, which clearly contains a directed spanning tree.

\begin{figure}[!h]
\begin{center}
\includegraphics[width=1.8in]{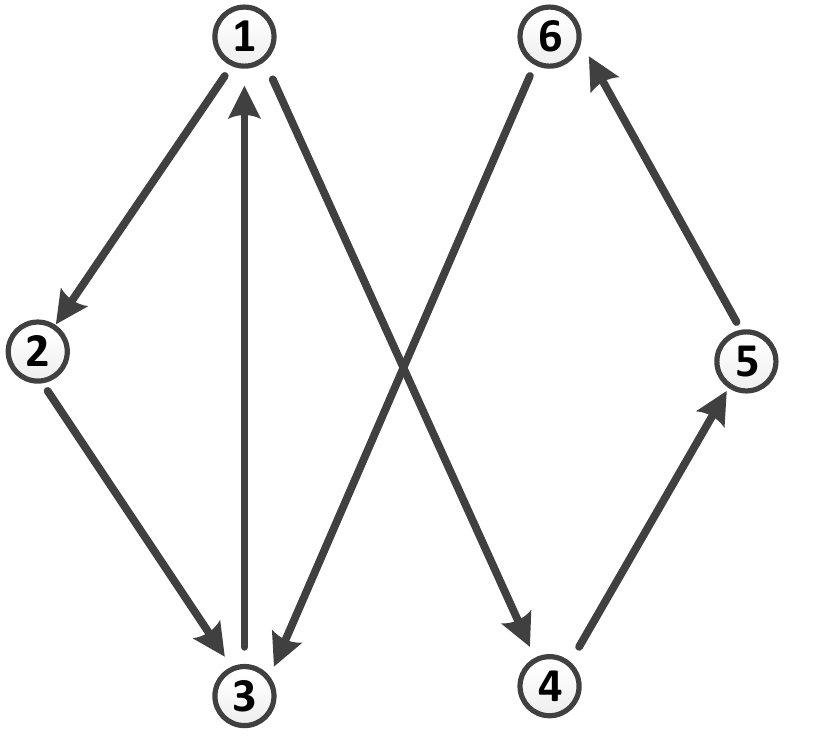}    
\caption{The directed communication graph.}  
\label{fig1}                                 
\end{center}                                 
\end{figure}

We then have $\hat G=\begin{bmatrix}    1  & 0&0\\  0&1&0\\
      0&0&0\end{bmatrix}$ and $\hat G\breve A=\begin{bmatrix}    0   &  1  &   0\\
     1  &  -1 &    1\\
      0&0&0\end{bmatrix}$. The eigenvalues of $\hat G\breve A$ are $0,0.618,-1.618$, and we can choose $\bar{\mathcal{M}}_1=-1,\bar{\mathcal{M}}_2=-2$ with $\widetilde{\mathcal{M}}_1=-I_3$ and $\widetilde{\mathcal{M}}_2=-2I_3$. Let $\tau=1$. Choose $\bar{\mathcal{H}}_1=\bar{\mathcal{H}}_2=\begin{bmatrix}    1 & 0 \end{bmatrix}$. Solving the Sylvester equation \dref{sylvester10} gives $\bar{\mathcal{T}}_1=\begin{bmatrix}    0 & 1&-1 \end{bmatrix}$ and $\bar{\mathcal{T}}_2=\begin{bmatrix}    1 & -1&0.5 \end{bmatrix}$. We can calculate $\bar{\mathcal{N}}_1=\begin{bmatrix}1&1\end{bmatrix}$, $\bar{\mathcal{N}}_2=\begin{bmatrix}1&-1\end{bmatrix}$, $\bar{\mathcal{U}}_1=\begin{bmatrix}0&1&0\\1&0&1\\0 &  0&1\end{bmatrix}$, $\bar{\mathcal{U}}_2=\begin{bmatrix}0&1&0\\-1&1&0.5\\0 &  0&1\end{bmatrix}$ and $\bar{\mathcal{D}}=\begin{bmatrix}-0.582I_3&1.582I_3\end{bmatrix}$.
Solving the LMI \dref{lmi} gives a solution
$Q=\begin{bmatrix}
   0.8695  & -0.1369   &-1.1761\\
   -0.1369  &  0.2512   & 0.3033\\
   -1.1761  &  0.3033   & 2.9821
\end{bmatrix}$. Then,
$P=\begin{bmatrix}
  2.4730    &0.1941 &   0.9555\\
    0.1941  &  4.5540 &  -0.3867\\
    0.9555  & -0.3867  &  0.7515
\end{bmatrix}$.
The initial states of the agents and observers are randomly chosen, and $\rho_i(0)=1$.

Fig. \ref{fig31} shows that the minimal-order appointed-time unknown input observer $\hat\xi_i$ can exactly estimate the consensus error $\xi_i$ at appointed time $\tau$.
The consensus error $\xi_i$ is depicted in Fig. \ref{figxi1} and the state $x_i$ is presented in Fig. \ref{figx1}. It is clear that the consensus is indeed realized though the state of each agent diverges.
\begin{figure}
\centering
\includegraphics[width=3.5in]{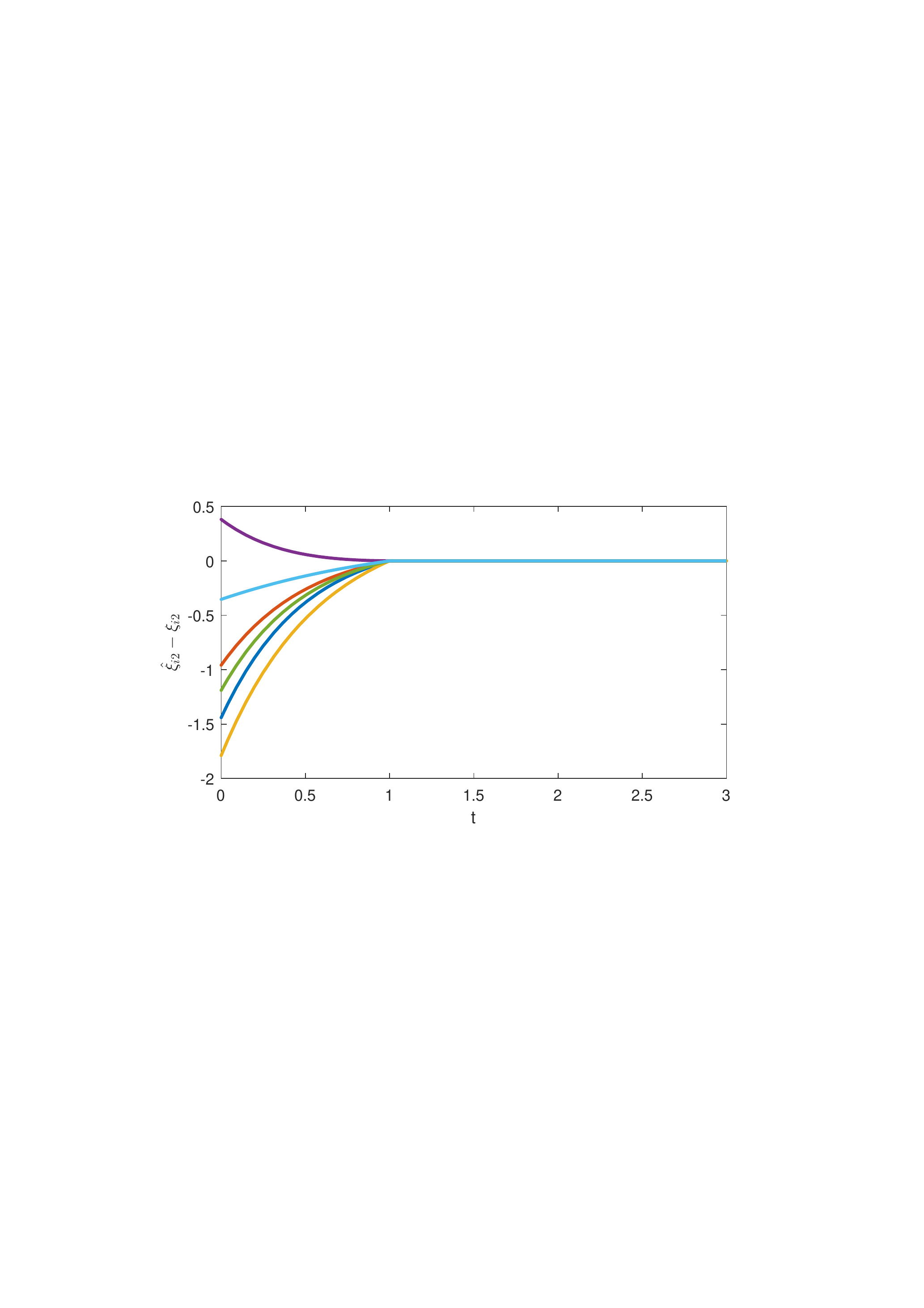}
\caption{The observer $\hat\xi_{i2}$ estimating the consensus error $\xi_{i2}$ at appointed time $\tau$.}\label{fig31}
\end{figure}
\begin{figure}
  \centering
  \subfigure{
    \label{fig:subfig:a11} 
    \includegraphics[width=3.5in]{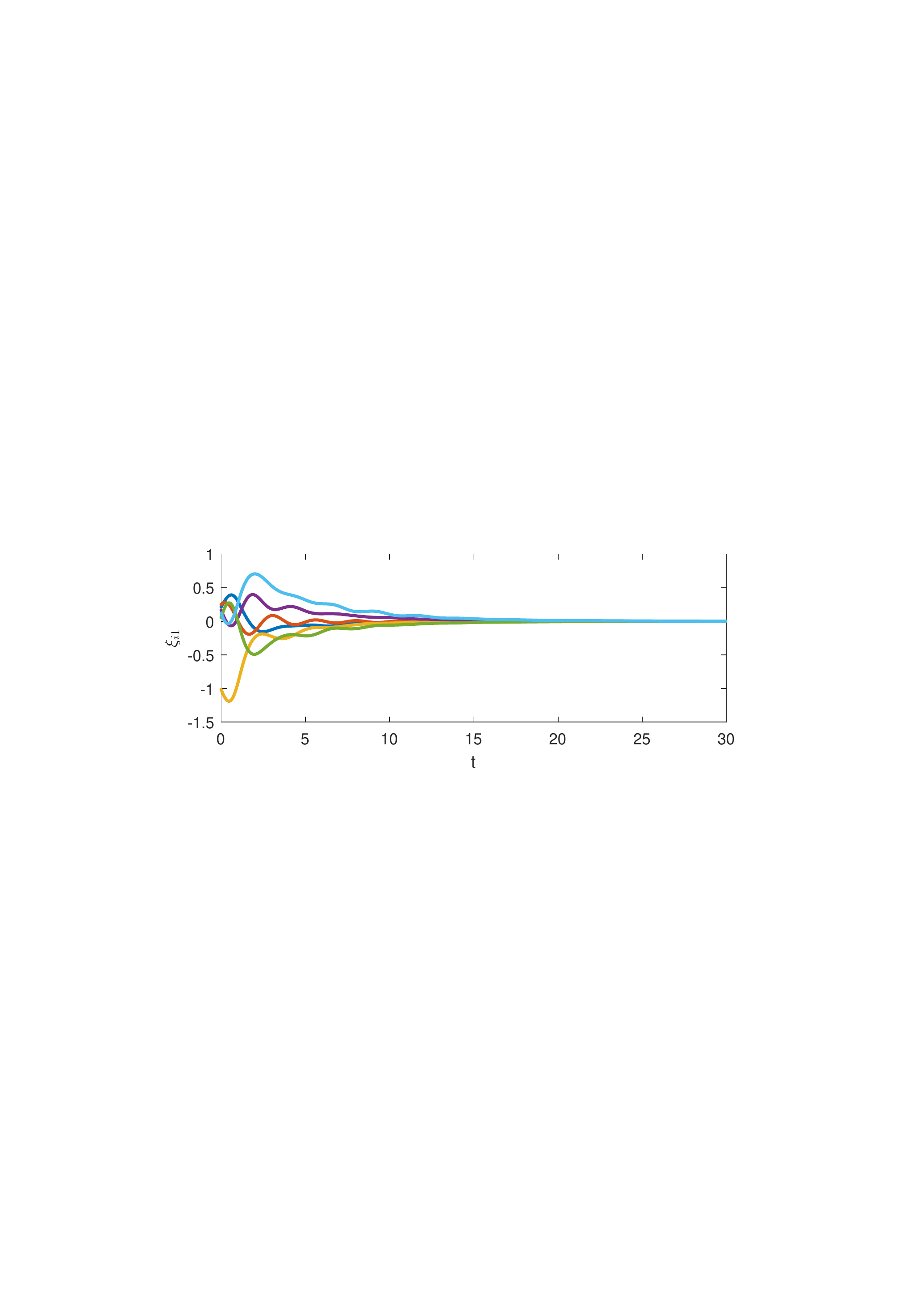}}
  \hspace{0.01in}
  \subfigure{
    \label{fig:subfig:b11} 
    \includegraphics[width=3.5in]{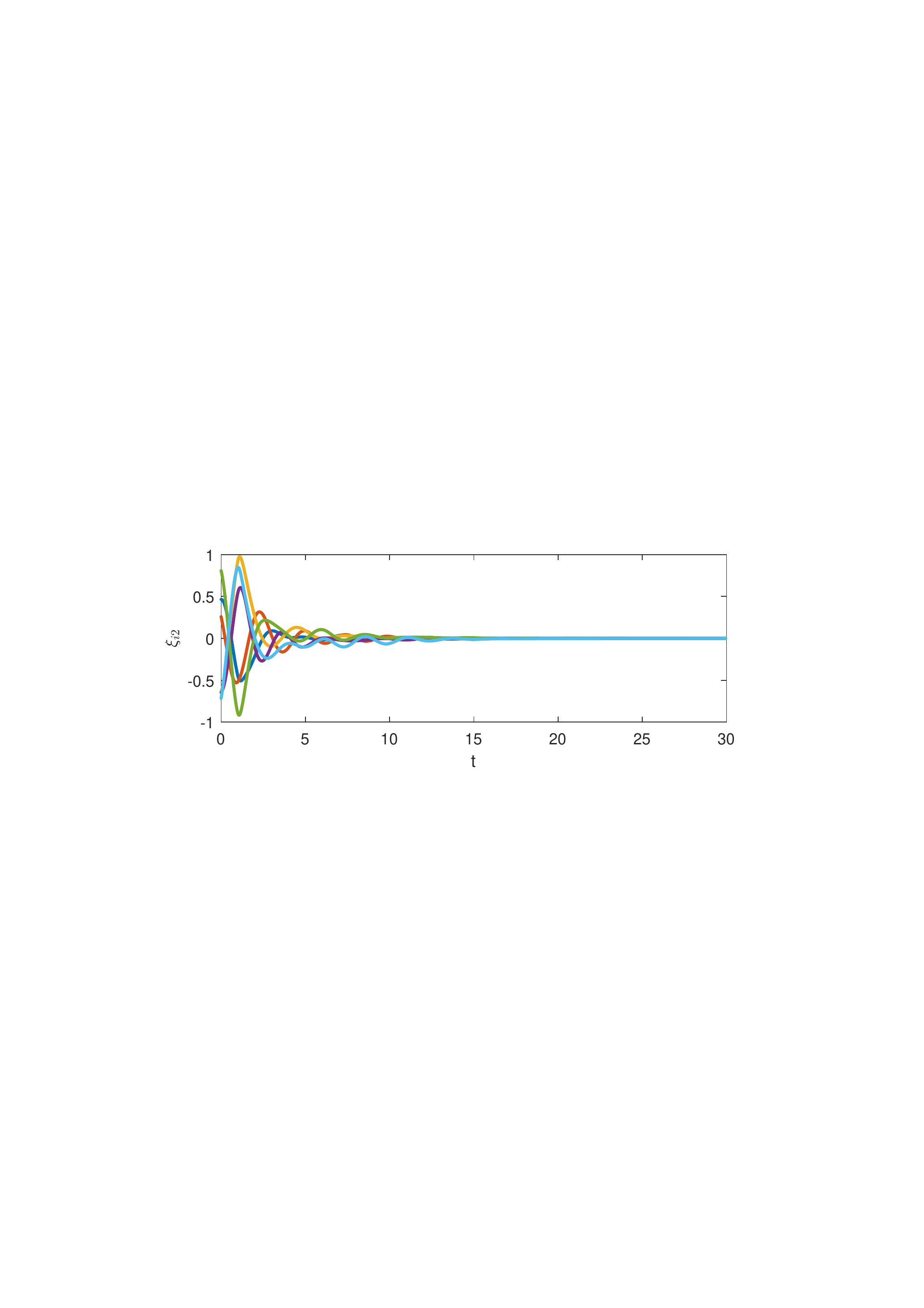}}
  \hspace{0.01in}
  \subfigure{
    \label{fig:subfig:c11} 
    \includegraphics[width=3.5in]{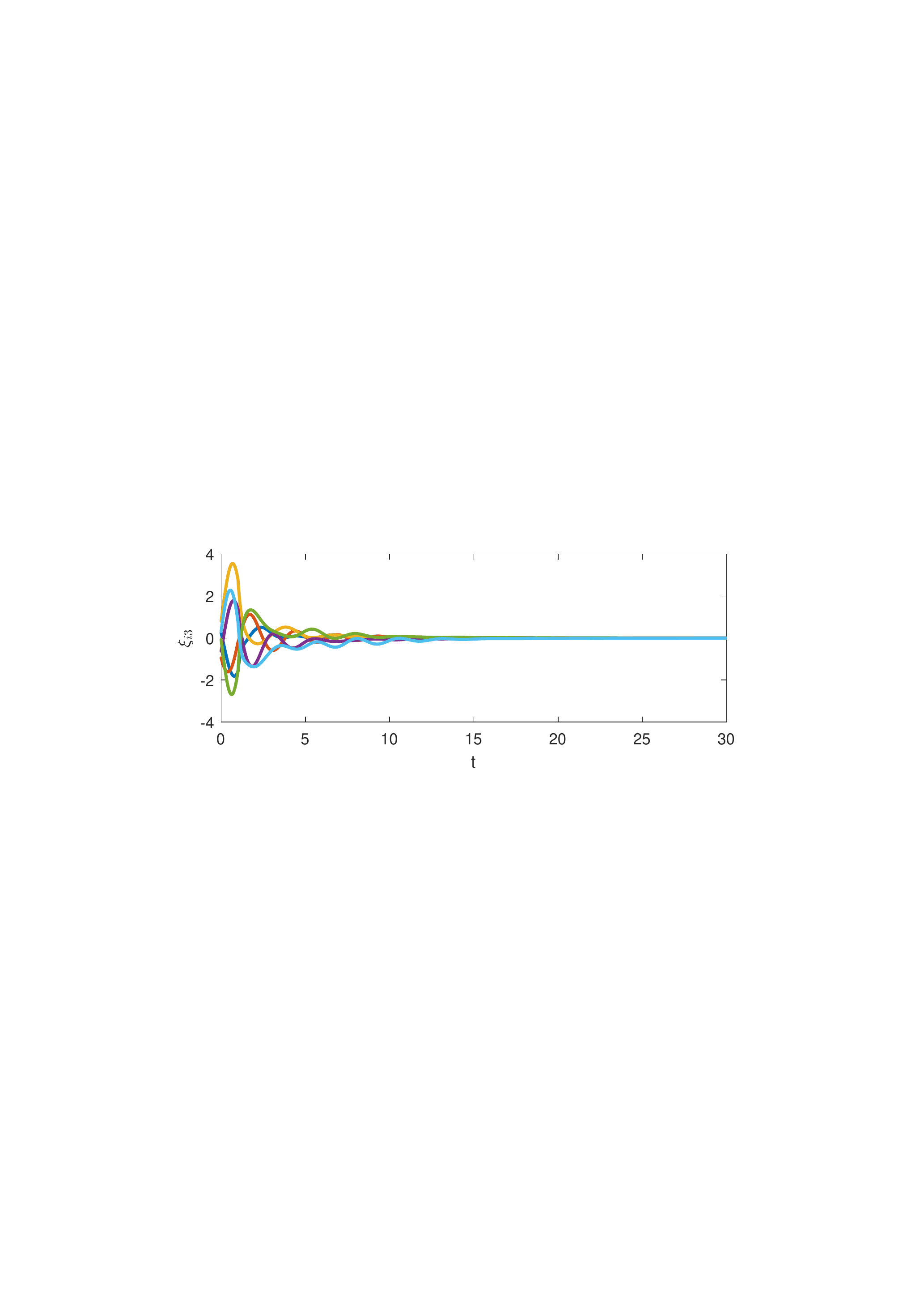}}
  \caption{The consensus error $\xi$ converging to zero.}
  \label{figxi1}
\end{figure}
\begin{figure}
  \centering
    \subfigure{
    \label{fig:subfig:a0} 
    \includegraphics[width=3.5in]{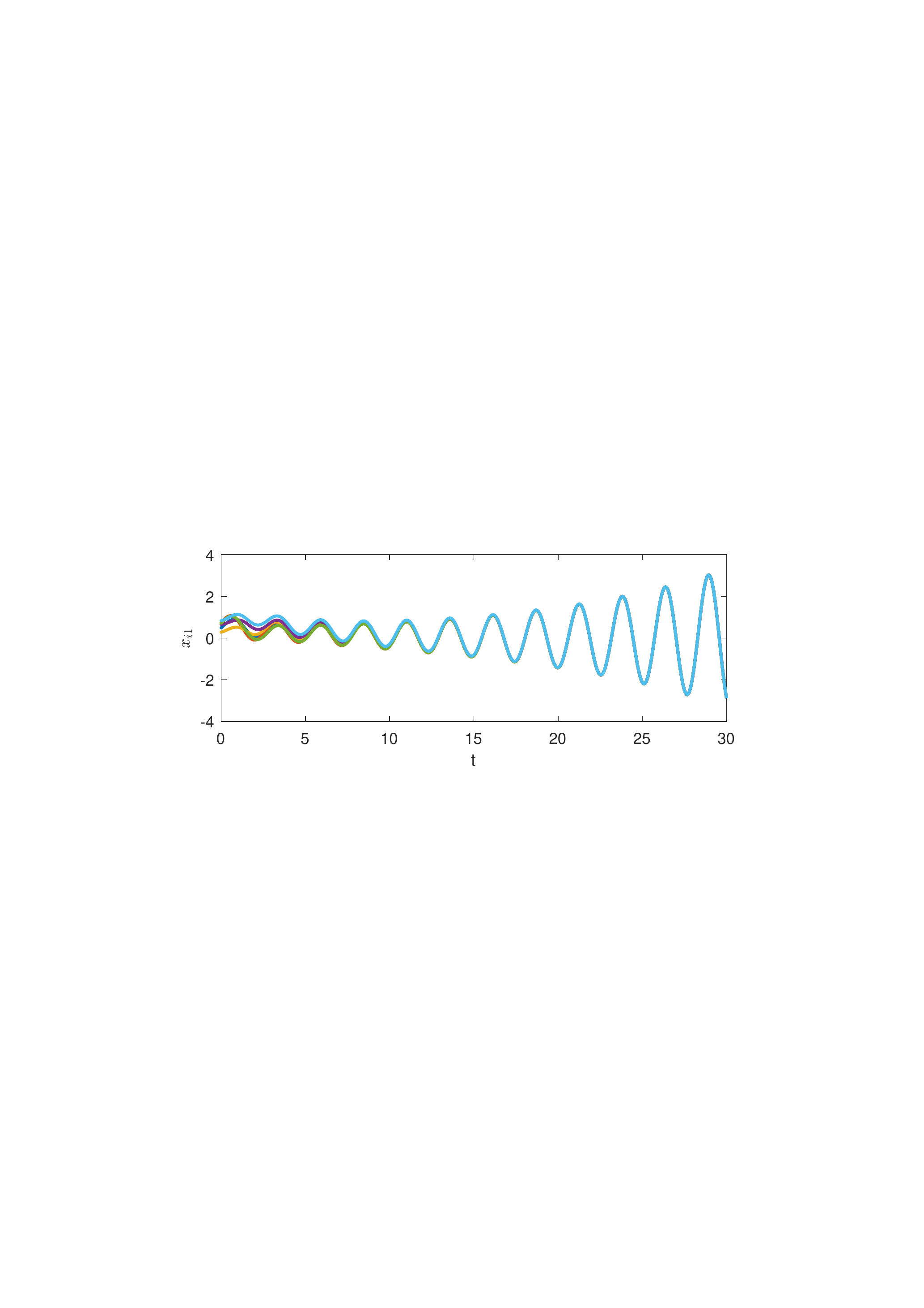}}
  \hspace{0.01in}
  \subfigure{
    \label{fig:subfig:b0} 
    \includegraphics[width=3.5in]{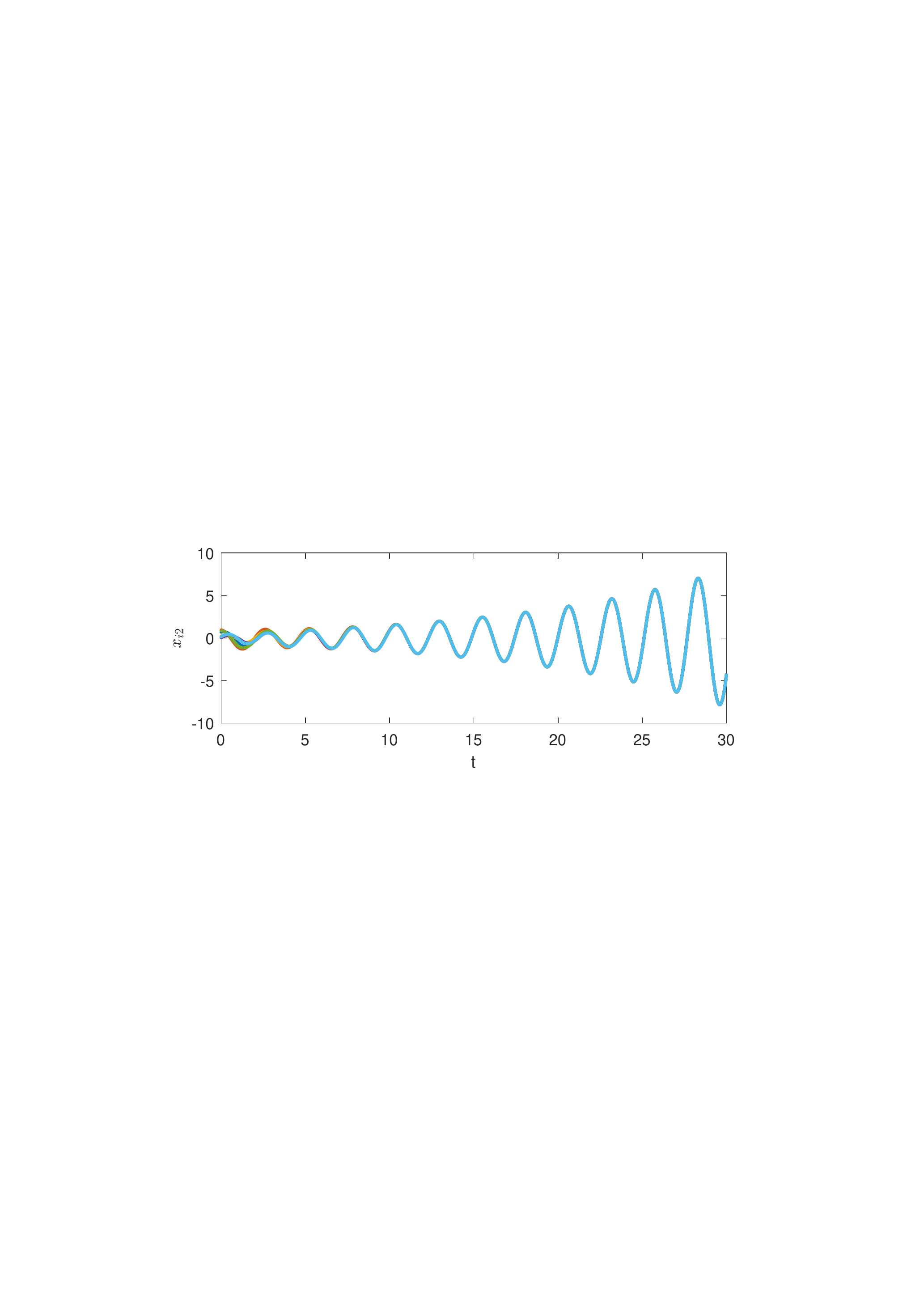}}
  \hspace{0.01in}
  \subfigure{
    \label{fig:subfig:c0} 
    \includegraphics[width=3.5in]{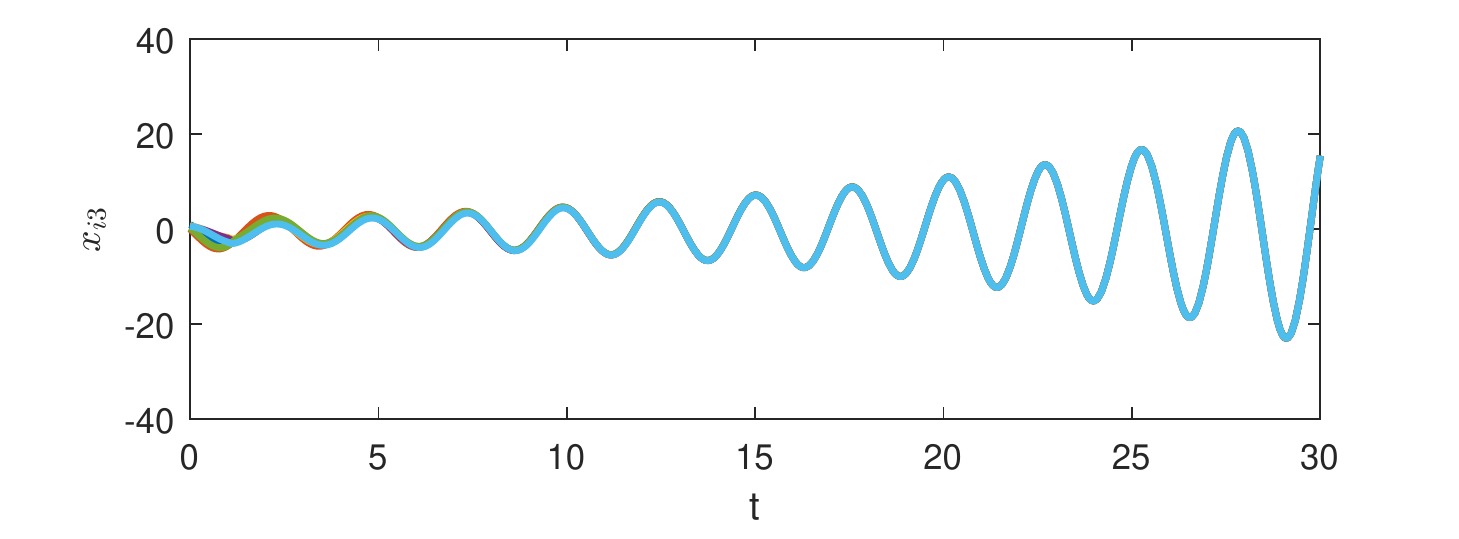}}
  \caption{The state $x$ reaching consensus.}
  \label{figx1}
\end{figure}

\end{example}

\section{Conclusion}\label{s6}
In this paper, we have presented a unified framework of the minimal-order appointed-time observer design based on the pairwise observer structure for linear systems. It has been revealed that the structure of the proposed observer is coincident with that of the full-order appointed-time observer in \cite{engel2002tac}. The general model of linear system with the unknown input was also considered, and the minimal-order appointed-time unknown input observer was then proposed, which has lower order than the appointed-time observers in existing literature.
The minimal-order appointed-time observer design methodology can be easily extended into the functional observer design for linear systems or nonlinear systems, and can be further applied to the attack detection for cyber-physical systems.


%

\end{document}